\documentclass[letterpaper, 10 pt, conference]{ieeeconf}  % Comment this line out
\IEEEoverridecommandlockouts                              % This command is only
\usepackage{amsmath}
\usepackage{amssymb}
\usepackage{mathtools}
\usepackage{algorithm}
\usepackage{algorithmic}

\newcommand{\lp}{\left(}
\newcommand{\rp}{\right)}
\newcommand{\lb}{\left[}
\newcommand{\rb}{\right]}
\newcommand{\lbp}{\left\{}
\newcommand{\rbp}{\right\}}

\newcommand{\mcal}{\mathcal}
\newcommand{\mrm}{\mathrm}

\newcommand{\mbb}{\mathbb}

\newcommand{\lce}{\left\lceil}
\newcommand{\rce}{\right\rceil}

\newcommand{\E}{\mathbb{E}}

\newcommand{\argmin}{\mathop{\mathrm{argmin}}}
\newcommand{\argmax}{\mathop{\mathrm{argmax}}}

\makeatletter
\newcommand*{\indep}{%
  \mathbin{%
    \mathpalette{\@indep}{}%
  }%
}
\newcommand*{\nindep}{%
  \mathbin{%                   % The final symbol is a binary math operator
    \mathpalette{\@indep}{\not}% \mathpalette helps for the adaptation
  }%
}
\newcommand*{\@indep}[2]{%
  \sbox0{$#1\perp\m@th$}%        box 0 contains \perp symbol
  \sbox2{$#1=$}%                 box 2 for the height of =
  \sbox4{$#1\vcenter{}$}%        box 4 for the height of the math axis
  \rlap{\copy0}%                 first \perp
  \dimen@=\dimexpr\ht2-\ht4-.2pt\relax
  \kern\dimen@
  {#2}%
  \kern\dimen@
  \copy0 %                       second \perp
} 
\makeatother
\title{\LARGE \bf
Learning Where to Look: UCB-Driven Controlled Sensing for Quickest Change Detection
}

\author{Yu-Han Huang, Argyrios Gerogiannis, Subhonmesh Bose, Venugopal V. Veeravalli}

\begin{document}

\maketitle
\thispagestyle{empty}
\pagestyle{empty}

%%%%%%%%%%%%%%%%%%%%%%%%%%%%%%%%%%%%%%%%%%%%%%%%%%%%%%%%%%%%%%%%%%%%%%%%%%%%%%%%
\begin{abstract}
%Many modern systems rely on large sets of sensors to detect abrupt changes in underlying dynamics. Such changes are often localized: only a small and unknown subset of sensors becomes informative after the change, rendering uniform monitoring both inefficient and costly. To this end, 

We study the multichannel quickest change detection problem with bandit feedback and controlled sensing, in which an agent sequentially selects one of the data streams to observe at each time-step and aims to detect an unknown change as quickly as possible while controlling false alarms. Assuming known pre- and post-change distributions and allowing an arbitrary subset of streams to be affected by the change, we propose two novel and computationally efficient detection procedures inspired by the Upper Confidence Bound (UCB) multi-armed bandit algorithm. Our methods adaptively concentrate sensing on the most informative streams while preserving false-alarm guarantees. We show that both procedures achieve first-order asymptotic optimality in detection delay under standard false-alarm constraints. We also extend the UCB-driven controlled sensing approach to the setting where the pre- and post-change distributions are unknown, except for a  mean-shift in at least one of the channels at the change-point.  This setting is particularly relevant to the problem of learning in piecewise stationary environments. Finally, extensive simulations on synthetic benchmarks show that our methods consistently outperform existing state-of-the-art approaches while offering substantial computational savings.

  %For learning in piecewise-stationary environments, changes in environments affects observations from actions or control policies differently. This work provides a step towards a principled forced exploration algorithm for change detection in such environments.
  
\end{abstract}

\section{Introduction}

The problem of Quickest Change Detection (QCD) has an extensive range of applications in science and engineering, including quality control \cite{hawkins2003changepoint}, earthquake detection \cite{zhou2025sequential}, and online learning \cite{huang2025sequential}. In this problem, an agent sequentially observes the environment through noisy samples, whose distribution shifts when the environment undergoes a change. The goal for the agent is to detect the change as soon as possible while not triggering false alarms too often. See \cite{poor-hadj-qcd-book-2009, vvv_qcd_overview, tart-niki-bass-2014, xie_vvv_qcd_overview} for books and survey articles on this topic.

%\argy{Here it would make sense to practically motivate the bandit-qcd problem saying that in many applications, sensors exist that monitor a procedure (e.g. in production or security) but when a change happens only a subset of the sensors is informative and we wanna be able to focus to the said change as fast as possible.}

In this work, we study a variant of QCD, referred to as \emph{multichannel bandit QCD}  \cite{xu2021optimum,zhang2023bandit,xu2023asymptotic}, which is a special case of the more general problem of QCD with controlled sensing that was introduced in \cite{gopalan2021bandit} and further studied in \cite{veeravalli2024quickest}. In the multichannel bandit QCD problem, there are a finite number of data streams, and each data stream undergoes a different degree of shift in its distribution when the change occurs. The agent can only sample from \emph{one} stream at a time, and therefore requires a carefully designed control policy that selects which data stream to observe. In particular, the change detector and the control policy have to be designed jointly to achieve the best tradeoff between detection delay and false alarm rate. 

A canonical example application of the QCD problem with controlled sensing, described in \cite{gopalan2021bandit}, is the surveillance system with a limited number of sensors that can be steered in different directions and locations. In this scenario, only a few sensors with specific positions can detect the change, as changes are localized in certain areas. 
%This type of surveillance system has a wide range of applications, including anomaly detection in online social networks \cite{viswanath2014towards}, quality control in manufacturing systems \cite{ding2006distributed}, and intrusion detection in computer networks \cite{bass1999multisensor}. 

%Additional examples are discussed in \cite{gopalan2021bandit,zhang2023bandit}. 

Our motivation for studying the multichannel bandit QCD problem comes from detection-augmented learning in piecewise stationary environments \cite{huang2025detection,gerogiannis2025detection}, where one detects changes in the environment to restart stationary  learning algorithms. To detect such changes, one cannot rely on the optimal learned algorithm over a stationary interval to detect changes in the environment quickly; it requires forced exploration. In this context, each channel  corresponds to an action chosen for forced exploration, with the observations being the rewards obtained from that action. When the learning environment changes, this change is reflected in the reward distributions obtained from the exploration actions, but the degree to which the distribution changes can vary  across different actions. Current strategies for change detection typically cycle through the various exploration actions in a round-robin fashion \cite{besson2022efficient,huang2025detection,gerogiannis2025detection}. The round-robin approach is wasteful if only a small subset of the actions see significant shifts in distribution. Ideally, one would like to only explore actions for which the distribution shift is the most pronounced, making it a multichannel bandit QCD problem instance.

%\subsection{Existing Algorithms for Multichannel Bandit QCD}
A greedy algorithm for the multichannel bandit QCD problem was proposed in \cite{xu2021optimum} that samples a single channel until a cumulative log-likelihood 
ratio statistic for that channel exceeds a certain positive threshold or falls below zero. In the former case, a change is declared, and in the latter, the algorithm discards the observations from the channel, and repeats the process on the next channel. This greedy algorithm was shown to possess certain asymptotic optimality properties in terms of the tradeoff between delay delay and the false alarm rate when only a single channel is affected by the change or all affected channels see identical distribution shifts. This algorithm can be far from optimal when there is more than one affected channel and the distribution shifts can vary significantly across actions. Intuitively, the greedy algorithm can focus on a single channel with a small distribution shift and declare a change with a large delay, never getting to sample another channel with a larger distribution shift, capable of certifying the change with far fewer samples. 
To circumvent this challenge, the authors of \cite{gopalan2021bandit} proposed the $\epsilon$-Greedy Change Detector ($\epsilon$-GCD) for the general problem of QCD with controlled sensing. This algorithm, however, is
% An alternative algorithm that aims to circumvent this shortcoming of the greedy algorithm was proposed in \cite{gopalan2021bandit}, in which the authors address the general problem of QCD with controlled sensing. However, the $\epsilon$-Greedy Change Detector ($\epsilon$-GCD) developed in \cite{gopalan2021bandit} is also 
not (asymptotically) optimal; the worst-case delay can increase with the pre-change duration, as pointed out by \cite{veeravalli2024quickest}. 

An asymptotically optimal procedure for the general problem of QCD with controlled sensing, called the Windowed-Chernoff-Cumulative Sum (WCC) procedure, was provided in \cite{veeravalli2024quickest}. In the special case of the multichannel bandit QCD problem, the WCC procedure chooses which channel to sample based on a windowed maximum-likelihood estimate (MLE) of the set of channels that are affected by the channel, and then chooses the channel with the largest potential distribution shift within that set.  In addition, to avoid being unduly biased by the MLE, the WCC procedures performs random exploration of the other channels at $o(n)$ time-steps over a time-horizon of length $n$. The observations from the chosen channels at each time-step are then fed into a windowed Cumulative Sum (CuSum) test for detecting the change.
%and a change is declared when the CuSum statistic exceeds a certain threshold.

While the WCC procedure is asymptotically optimal for the multichannel bandit QCD problem, the control policy, which is designed to work for the general problem of QCD with controlled sensing, is not well-matched to the multichannel special case. In particular, the number of subsets over which the MLE is computed grows exponentially with the number of channels. Another aspect of the WCC procedure that might make it less useful in the multichannel setting is that the observations across channels are combined into a single windowed CuSum test for change detection. As discussed in Section~\ref{sec:unknown}, this aspect does not allow the WCC procedure to be easily generalized to the setting where the pre- and post-change distributions of the observations from the channels are unknown.

%In this paper, we develop new procedures for the multichannel bandit QCD problem that are inspired by the 

%Although this procedure is shown to achieve first-order asymptotic optimality, its control policy does not follow classical algorithms in bandit literature, which learn the best action efficiently without exploring other suboptimal ones excessively. Therefore, it is natural to believe that employing a bandit algorithm as the control policy will improve the performance.

%In addition, WCC procedure combines the LLRs from different data streams into a single statistic. 

%As shown later in Section \ref{sec:unknown}, it is not straightforward to generalize this statistic to scenarios in which the distributions are unknown to the agent, as the observations from different streams follow different distributions. Consequently, it is desirable to separately design a statistic for each data stream.

Given the similarity between the multi-armed bandit (MAB) and the multichannel bandit QCD problems, it is natural to ask if one can develop MAB-like control policies for the latter problem. In order to do so, we need to identify an appropriate (pseudo) reward that can be used to guide action selection for change detection. Since the channel with the largest distribution shift (measured in terms of the Kullback-Leibler (KL) divergence) will generate observations that have the largest average log-likelihood ratio (LLR) after the change, we may use LLR of the observations as the reward. We design efficient procedures for the multichannel bandit QCD problem by combining the popular Upper Confidence Bound (UCB) MAB algorithm with CuSum change detection by viewing LLRs as rewards from the actions. 

% \bose{Here, say... one can design a statistic that combines LLR from all arms and name the algorithm. Then, say the problem with WCC and the UCB-CuSum procedure that they do not easily generalize to the case when the distributions are not known. Then say, we study PA-UCB-CuSum.}

Concretely, our contributions are as follows: \textit{(i)} We propose \textbf{UCB-CuSum}, a bandit QCD procedure using UCB-based action selection to compute a CuSum statistic for change detection that is proven to be asymptotically optimal. \textit{(ii)}     We generalize the above algorithm to \textbf{PA-UCB-CuSum} that maintains a separate CuSum statistic for each action for change detection, which is shown to offer a straightforward extension to settings with unknown pre- and post-change distributions. \textit{(iii)} The proposed methods are empirically shown to outperform prior methods.

\section{Problem Formulation}
\label{sec:problem_form}

In this section, we formally define a multichannel bandit QCD problem. We start by defining some notations. Let $\lb n \rb \coloneqq \lbp 1, \dots, n \rbp$ for any $n \in \mbb{N}$, and 
$\left\{ X_{a,n} : a \in [K], n \in \mbb{N}\right\}$ be $K$ sequences of mutually independent stochastic observations indexed by the action $a$ and the time-step $n$. For each action $a$, there are two densities the observation $X_{a,n}$ can possibly follow: the pre-change density $f_{a,0}$ and the post-change density $f_{a,1}$. At some time-step $\nu$, referred to as the \emph{change-point}, the underlying distributions of the stochastic observations could possibly change, i.e., for $a \in \lb K \rb$ and $n \in \mbb{N}$,
\begin{equation}\label{eq:obs_distr}
    X_{a,n} \sim \begin{dcases}
        f_{a,1},& n \geq \nu \mrm{\: and\:} a \in \mcal{A}, \\
        f_{a,0}, & \mrm{otherwise},
    \end{dcases}
\end{equation}
where $\mcal{A} \subseteq \lb K \rb$ is an nonempty subset that denotes the actions that undergo a distributional shift in the observations. In other words, for action $a$ in the subset $\mcal{A}$, the observations follows the pre-change density $f_{a,0}$ before the change-point $\nu$ and follows the post-change density $f_{a,1}$ after the change-point $\nu$. For actions that are not in the subset $\mcal{A}$, the observations follows $f_{a,0}$ before and after $\nu$, as the change does not affect the stochastic properties associated with these actions. The densities $f_{a,0}$ and $f_{a,1}$ are with respect to the same dominating measure $\lambda$, and they are all fully known by the agent.\footnote{In our experiments, we explore cases where the pre- and post-change densities are bounded but not known.} However, the change-point $\nu$ and the subset $\mcal{A}$ are deterministic and completely unknown to the agent. We also make the following assumptions, which does not impose a strict restriction on the pre- and post-change densities.

\begin{assumption}\label{assum:KL}
Assume that the KL Divergence between $f_{a,1}$ and $f_{a,0}$ is positive, i.e., for any $a \in \lb K \rb$,
\begin{equation}%\label{eq:KL}
    D \lp f_{a,1} || f_{a,0} \rp \coloneqq \int \log \lp f_{a,1} / f_{a,0} \rp f_{a,1}d\lambda > 0.
\end{equation}
In addition, assume that there exists a constant $v < \infty$ such that for any $a \in \lb K \rb$, $\log\lp f_{a,1} / f_{a,0} \rp$ is $v$-sub-Gaussian, i.e., for any $\theta \in \mbb{R}$,
\begin{align}%\label{eq:sec_ord_assum}
    &\log \lp \int \exp \lp \theta \lp \log \lp f_{a,1} / f_{a,0} \rp - D \lp f_{a,1} || f_{a,0} \rp \rp \rp f_{a,1}d\lambda \rp \nonumber\\
    &\leq \frac{v}{2}\theta^{2}.
\end{align}
\end{assumption}
\vspace{2pt}

At time-step $n$, the agent chooses an action $A_{n}$ to determine which data stream to observe. These actions are determined based on past observations; in particular, $A_{n}$ is $\mcal{F}_{n-1}$ measurable where $\mcal{F}_{n-1} \coloneqq \sigma \lp X_{A_{1},1}, \dots, X_{A_{n-1},n-1} \rp$ is the filtration generated by all observations collected prior to time-step $n$. The sequence of actions $A \coloneqq \lbp A_{n}: n \in \mbb{N} \rbp$ forms a \emph{control policy}. 
Based on the noisy observations obtained from the control policy, the agent decides whether a change has occurred or not. The stopping time at which the agent detects a change is denoted by $T$, and we refer to the pair $\lp A, T \rp$ as the \emph{bandit QCD procedure}, as these two components determine the detection delay and the false alarm frequency.

We use $\mbb{P}_{\nu, \mcal{A}}^{A}$ and $\E_{\nu, \mcal{A}}^{A}$ to denote the  measure and expectation when the agent uses $A$ as its control policy, and the density associated with actions in $\mcal{A}$ changes at $\nu$. We also use $\mbb{P}_{\infty}^{A}$ and $\E_{\infty}^{A}$ to denote those when the agent uses $A$ as its control policy and no changes occur. For the false alarm and detection delay metrics, we employ Lorden's criterion \cite{lorden1971procedures}. The false alarm is evaluated by the mean time to false alarm $\E_{\infty}^{A}\lb T \rb$,\footnote{This false alarm metric is different from but related to the false alarm probability used in many prior works.} and the detection delay is evaluated by the worst-case, over all possible change-points and pre-change observations, expected detection delay conditioned on the filtration $\mcal{F}_{\nu-1}$, i.e., for any control policy $A$ and $\mcal{A}\subseteq \lb K \rb$,
\begin{equation}\label{eq:WADD}
    \!\!\mcal{J}_{\mcal{A}}\lp A, T \rp\!\coloneqq\! \sup_{\nu \in \mbb{N}}\mrm{\:ess\:}\sup \E_{\nu,\mcal{A}}^{A} \!\lb \lp T - \nu + 1\rp^{+} \!\Big| \mcal{F}_{\nu-1} \rb
\end{equation}
where $\lp x \rp^{+} \coloneqq \max\{x, 0\}$ is the positive part of $x \in \mbb{R}$. The definition in \eqref{eq:WADD} follows from the delay measure in \cite{veeravalli2024quickest}. The goal of the agent is to minimize the worst-case expected delay $\mcal{J}_{\mcal{A}}^{A}$, while controlling the mean time to false alarm to be above a predetermined level $\gamma$, i.e.,
\begin{equation}\label{eq:QCD}
    \min_{T,A} \mcal{J}_{\mcal{A}}\lp A, T \rp, \:\textrm{subject to}\:\E_{\infty}^{A}\lb T \rb \geq \gamma.
\end{equation}
\section{Our Procedures}

\subsection{UCB-CuSum Procedure}
In the QCD problem under Lorden's criterion without control sensing, i.e., $K=1$ and $\mcal{A} = \lbp 1 \rbp$, the CuSum test \cite{page1954continuous} is asymptotically optimal as $\gamma \to \infty$ \cite{lorden1971procedures, lai1998information}. The CuSum statistic is computed by a recursion that adds the LLR of the observation at time-step $n$ with the positive part of the CuSum statistic at the previous time-step $n-1$. The CuSum test raises an alarm whenever the CuSum statistic surpasses a constant threshold. Due to the asymptotic optimality, it is natural to expect that a procedure that utilizes a CuSum-like statistic will perform well in the bandit QCD problem. Thus, following the recursion along the lines of \cite{zhang2023bandit,veeravalli2024quickest}, we define a CuSum-like statistic: for $n \in \mbb{N}$,
\begin{equation}\label{eq:CuSum_stat}
    C_{n}^{A} \coloneqq \lp C_{n-1}^{A} \rp^{+} + \mrm{LLR} \lp A_{n}, X_{A_{n}, n} \rp
\end{equation}
with $C_{0}^{A} \coloneqq 0$ and the LLR function
\begin{equation}\label{eq:LLR}
    \mrm{LLR} \lp a, x \rp \coloneqq \log \lp f_{a, 1}\lp x \rp/f_{a, 0} \lp x \rp \rp.
\end{equation}
The procedure using the CuSum-like statistic in \eqref{eq:CuSum_stat} stops whenever the statistic surpasses a constant threshold $b$, i.e.,
\begin{equation}\label{eq:CuSum_T}
    T_{b}^{A} \coloneqq \inf \lbp n \in \mbb{N}: C_{n}^{A} \geq b \rbp.
\end{equation}
We choose $b$ to satisfy the false alarm constraint in \eqref{eq:QCD}.

To achieve faster detection, the control policy $A$ should select the action with the largest average LLR after the change-point, so that the CuSum-like statistic in \eqref{eq:CuSum_stat} grows quickly to surpass the threshold. Therefore, to learn the optimal action that leads to the largest increment in the statistic in \eqref{eq:CuSum_stat}, we employ an upper confidence bound (UCB) algorithm \cite{auer2002finite} with periodic restarts $A_{\mrm{UCB}}$ as our control policy $A$. First, we partition the entire horizon into intervals of length $W$, i.e., $\mbb{N} = \cup_{j=1}^{\infty} \mcal{I}_{j}$ with the $j^{\mrm{th}}$ interval $\mcal{I}_{j} \coloneqq \lbp \lp j-1 \rp W + 1, \dots, jW \rbp$. In each interval, we apply the UCB algorithm to select the action and restart the control policy at the start of the next interval. 
Notice that we restart our UCB algorithm every $W$ time-steps. Without such a restart, the control policy tends to heavily concentrate on the action with the largest average LLR before the change. Consequently, the control policy rarely explore other actions, leading to delayed detection after the change.
We view the LLR of the observation  $X_{A_{n}, n}$ from action $A_{n}$ as the reward at time-step $n$, that is,
\begin{equation}\label{eq:reward}
    R_{n} \coloneqq \mrm{LLR} \lp A_{n}, X_{A_{n}, n} \rp.
\end{equation}
Let $j = \lce n / W \rce$ denote the index of the current interval, and let $m = \lp j - 1 \rp W + 1$ denote the start of the current interval. In addition, we use $N_{a, n}$ to denote the times action $a$ is selected since the start of the current interval, i.e.,
\begin{align}\label{eq:N_an}
    N_{a, n} \coloneqq \sum_{i=m}^{n} \mbb{I} \lbp A_{i} = a \rbp.
\end{align}
Let $\hat{\mu}_{a, n}$ denote the empirical mean of the rewards from action $a$ since the start of the current interval, i.e.,
\begin{align}\label{eq:mu_hat}
    \hat{\mu}_{a, n} \coloneqq \begin{dcases}
        \frac{1}{N_{a,n}} \sum_{i=m}^{n} R_{i}\mbb{I} \lbp A_{i} = a \rbp,& N_{a,n} > 0\\
        0,& N_{a,n} = 0.
    \end{dcases}
\end{align}
Then, the UCB index is defined as follows:
\begin{equation}\label{eq:UCB_index}
    \mrm{UCB} \lp a, n \rp \coloneqq \hat{\mu}_{a, n} + \sqrt{\frac{4v\log W}{N_{a,n}}},
\end{equation}
and the UCB algorithm selects the action with the largest UCB index, i.e., $A_{\mrm{UCB}} \coloneqq \lbp A_{n}: n \in \mbb{N} \rbp$ with
\begin{equation}\label{eq:UCB_Actions}
    A_{n} \coloneqq \argmax_{a \in \lb K \rb} \mrm{UCB} \lp a, n \rp.
\end{equation}
Combining the UCB control policy $A_{\mrm{UCB}}$ in \eqref{eq:UCB_Actions} with the stopping time $T_{b}^{A}$ in \eqref{eq:CuSum_T}, we propose the UCB-CuSum Procedure $\lp A_{\mrm{UCB}}, T_{b}^{A_{\mrm{UCB}}} \rp$, which is illustrated in Algorithm \ref{alg:UCB_CuSum}. This procedure uses the CuSum-like statistic in \eqref{eq:CuSum_stat} that combines LLRs from all actions together. 

\begin{algorithm}[htb]
  \caption{UCB-CuSum Procedure}
  \label{alg:UCB_CuSum}
  \begin{algorithmic}
    \STATE {\bfseries Input:} Threshold $b$, Interval length $W$
    \STATE Initialize $C \leftarrow 0$ and $n \leftarrow 0$.
    \REPEAT
    \IF{$n \mod W = 0$}
    \STATE Set $N_{a} \leftarrow 0$, $\hat{\mu}_{a} \leftarrow 0$, $\mrm{UCB}_{a} \leftarrow \infty$ for $a \in \lb K \rb$.
    \ENDIF
    \STATE Choose action $A \in \argmax_{a \in \lb K \rb} \mrm{UCB}_{a}$.
    \STATE Receive observation $X$.
    \STATE Update statistic $C \leftarrow \max\lbp C, 0 \rbp + \mrm{LLR}\lp A, X \rp$.
    \STATE Compute reward $R \leftarrow \mrm{LLR} \lp A, X \rp$.
    \STATE Update mean reward $\hat{\mu}_{A} \leftarrow \lp N_{a} \mu + R \rp/\lp N_{a} + 1 \rp$.
    \STATE Update index $\mrm{UCB}_{A} \leftarrow \hat{\mu}_{A} + \sqrt{4v\log W/N_{A}}$.
    \STATE Update number of pulls $N_{A} \leftarrow N_{A} + 1$.
    \STATE $n \leftarrow n + 1$.
    \UNTIL{$C \geq b$}
  \end{algorithmic}
\end{algorithm}

In the following result, we characterize the performance of the UCB-CuSum procedure.
% show that with a specific choice of the threshold $b$ and the interval length $W$, the UCB-CuSum procedure satisfies the mean time to false alarm constraint in \eqref{eq:QCD} and achieves first-order asymptotic optimality as $\gamma$ goes to infinity.
%
\begin{theorem}[Asymptotic optimality of UCB-CuSum]\label{thm:UCB_CuSum_opt}
With $b = \log \gamma$, $\E_{\infty}^{A_{\mrm{UCB}}} \lb T_{b}^{A_{\mrm{UCB}}} \rb \geq \gamma$. As $\gamma, W \to \infty$ with $W = o\lp \log \gamma \rp$, we have
\begin{equation}\label{eq:thm_WADD} \mcal{J}_{\mcal{A}} \!\lp \!A_{\mrm{UCB}}, T_{b}^{A_{\mrm{UCB}}} \!\rp \leq \frac{\log \gamma}{I_{\mcal{A}}} \lp 1 + o \lp 1\rp \rp.
\end{equation}
where $I_{\mcal{A}} \coloneqq \max_{a \in \mcal{A}} D\lp f_{a,1} || f_{a, 0} \rp$.
\end{theorem}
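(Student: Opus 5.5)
The proof has two parts: the false-alarm bound, via a likelihood-ratio martingale that holds for any adaptive control policy, and the delay bound, via a block (renewal) argument over the UCB restart intervals combined with a UCB regret bound inside each interval.

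\emph{False alarm.} The key fact is that, because $A_n$ is $\mcal{F}_{n-1}$-measurable,
\[
\E_\infty^{A}\bigl[e^{\mrm{LLR}(A_n,X_{A_n,n})}\mid\mcal{F}_{n-1}\bigr]=1
\]
for every policy $A$, in particular $A_{\mrm{UCB}}$.
\begin{itemize}
\item Define the Shiryaev--Roberts-type statistic $R_n\coloneqq(1+R_{n-1})e^{\mrm{LLR}(A_n,X_{A_n,n})}$ with $R_0=0$. Then $R_n-n$ is a $\mbb{P}_\infty^{A}$-martingale.
\item By induction, $e^{C_n^A}=\max\{e^{C_{n-1}^A},1\}\,e^{\mrm{LLR}}\le (1+R_{n-1})e^{\mrm{LLR}}=R_n$. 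Hence $C_n^A\ge b$ implies $R_n\ge e^b=\gamma$.
\item If $\E_\infty[T_b]=\infty$ there is nothing to prove. Otherwise, optional stopping applied to $R_n-n$ gives $\E_\infty[T_b]=\E_\infty[R_{T_b}]\ge\gamma$.
\end{itemize}

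\emph{Delay.} Fix $\nu$ and the pre-change history, and let $\tau_\nu$ be the start of the first interval $\mcal{I}_j$ with $(j-1)W+1\ge\nu$, so that $\tau_\nu-\nu< W$.
\begin{itemize}
\item Since $(x)^+\ge x$, for $n\ge\tau_\nu$ we have $C_n^A\ge \sum_{i=\tau_\nu}^{n}R_i$, whatever value $C_{\tau_\nu-1}^A$ takes. So the conditioning on $\mcal{F}_{\nu-1}$ can only help, and the essential supremum is handled.
\item Because UCB restarts at each interval boundary and observations after $\nu$ are i.i.d. across time, the block sums $Y_k\coloneqq\sum_{i\in\mcal{I}_{j+k-1}}R_i$ for $k\ge 1$ are i.i.d. under $\mbb{P}_{\nu,\mcal{A}}$.
\item Let $\kappa$ be the first $k$ at which $Y_1+\dots+Y_k\ge b$. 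Then
\[
(T_b-\nu+1)^+\le W+W\kappa .
\]
\item By Wald's identity / Lorden's renewal bound,
\[
\E[\kappa]\le \frac{b+\E[\text{overshoot}]}{\E[Y_1]} ,
\]
and the overshoot is controlled by $\E[(Y_1^+)^2]/\E[Y_1]=O(W)$, using sub-Gaussianity of the LLRs.
\end{itemize}

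\emph{Regret step.} Post-change, the reward of arm $a$ has mean $\mu_a=D(f_{a,1}\|f_{a,0})$ for $a\in\mcal{A}$ and $\mu_a=-D(f_{a,0}\|f_{a,1})<0$ for $a\notin\mcal{A}$. The best arm has mean $I_{\mcal{A}}$.
\begin{itemize}
\item The standard UCB analysis with exploration bonus $\sqrt{4v\log W/N}$ (Auer et al.) over horizon $W$ gives $\E[N_a]\le c_a\log W$ for each suboptimal arm with positive gap.
\item Therefore
\[
\E[Y_1]\ge W I_{\mcal{A}}-\sum_a \Delta_a\,\E[N_a]=W I_{\mcal{A}}-O(\log W).
\]
Arms in $\mcal{A}$ that tie with the best arm cost nothing.
\item Combining with the block bound,
\[
\mcal{J}_{\mcal{A}}\le W+W\,\frac{\log\gamma+O(W)}{W I_{\mcal{A}}-O(\log W)}=\frac{\log\gamma}{I_{\mcal{A}}}\bigl(1+o(1)\bigr),
\]
using $W=o(\log\gamma)$ and $\log W/W\to0$ as $W\to\infty$.
\end{itemize}

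\emph{Main obstacle.} The delicate step is justifying the UCB regret bound and the overshoot bound for unbounded LLR rewards. Assumption~\ref{assum:KL} gives sub-Gaussian concentration only under $f_{a,1}$. For arms $a\notin\mcal{A}$, the rewards are distributed under $f_{a,0}$, so the assumption does not cover them. Two fixes are available:
\begin{itemize}
\item \emph{Extra assumption.} Assume the analogous $v$-sub-Gaussian condition under $f_{a,0}$; this is implicit in the choice of bonus $\sqrt{4v\log W/N}$.
\item \emph{Change of measure.} Avoid controlling the unaffected arms precisely by a change-of-measure argument. This suffices because only an upper bound on their pull counts, of order $\log W$, is needed.
\end{itemize}
I would try the first route and state it explicitly. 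For the overshoot I would bound $\E[(Y_1^+)^2]\le W^2\max_a(\mu_a^2+v)$ directly rather than compute it sharply.
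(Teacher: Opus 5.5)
Your proof is correct. It shares the paper's skeleton: an SR-type martingale for the false alarm, the CuSum dominating an unreflected LLR sum at restart boundaries, and UCB regret supplying the per-block drift. The delay part is organized differently, though.

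The paper first reduces to $\nu=1$ in Lemma~\ref{lem:nu_to_one}, using a CuSum restarted at the first post-change restart $\ell$. It then bounds $\E_{1,\mcal{A}}^{A_{\mrm{UCB}}}[T_b^{A_{\mrm{UCB}}}]$ with Proposition~1 of \cite{veeravalli2024quickest}, which needs only conditional drift and second-moment bounds on the block sums. You merge these steps by bounding $C_n$ from below directly by the unreflected sum started at $\tau_\nu$. You then use the fact that the block sums are i.i.d.\ and independent of $\mcal{F}_{\nu-1}$, so Wald's identity plus Lorden's overshoot inequality suffice. This gives an explicit additive overshoot term $\E[(Y_1^+)^2]/\E[Y_1]=O(W)$, which makes it transparent that $W=o(\log\gamma)$ is exactly what is needed. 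In the paper the $W$-dependence is hidden in the constant $c$ of $c(1+\sqrt{b})$. That constant cannot really be free of $W$: $\bar{L}_b^{A_{\mrm{UCB}}}\ge 1$, yet the displayed bound tends to $0$ as $W\to\infty$ with $b$ fixed.

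You also correctly give the post-change mean of an unaffected arm as $-D(f_{a,0}\|f_{a,1})$. The paper's $\Delta_a$, and its appeals to Assumption~\ref{assum:KL} for the second moment and for Theorem~7.1 of \cite{lattimore2020bandit}, tacitly assume control of the LLR under $f_{a,0}$. Your extra assumption can in fact be dropped when $f_{a,0}\ll f_{a,1}$. Since $\E_{f_{a,0}}[e^{\theta\,\mrm{LLR}}]=\E_{f_{a,1}}[e^{(\theta-1)\mrm{LLR}}]\le\exp\bigl((\theta-1)D(f_{a,1}\|f_{a,0})+v(\theta-1)^2/2\bigr)$, the LLR is sub-Gaussian under $f_{a,0}$ with some larger parameter $v'$. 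The bonus $\sqrt{4v\log W/N}$ only has to cover the optimal arm, which lies in $\mcal{A}$. So the $O(\log W)$ pull-count bound for the other arms survives, with $v'$ entering only the constants. This is the concrete form of your change-of-measure route.

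One small point on the false alarm: $R_n$ is not bounded before $T_b$, so optional stopping does not directly give the equality $\E_\infty[T_b]=\E_\infty[R_{T_b}]$. The inequality you need still follows from $\E_\infty[T_b\wedge n]=\E_\infty[R_{T_b\wedge n}]$ and Fatou's lemma.
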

The formal proof is included in Appendix \ref{sec:thm1}. Here, we provide a sketch. 
The lower bound on the mean time to false alarm follows from Lemma 1 in \cite{veeravalli2024quickest}, which relies on the martingale property of a Shiryaev-Roberts (SR)-like statistic and the optional sampling theorem.
To prove the upper bound on the expected delay, we establish that 
% There are two main steps to this upper bound on the worst-case expected delay $\mcal{J}_{\mcal{A}}$. First, we show that
%
\begin{equation}
    \mcal{J}_{\mcal{A}} \lp A_{\mrm{UCB}}, T_{b}^{A_{\mrm{UCB}}} \rp \leq W + \E_{1,\mcal{A}}^{A_{\mrm{UCB}}} \lb T_{b}^{A_{\mrm{UCB}}} \rb,
\end{equation}
where recall that $\E_{1,\mcal{A}}^{A_{\mrm{UCB}}}$ denotes the expectation when the change occurs at time-step $1$.
The intuition behind this step is that the duration between the change-point $\nu$ and the subsequent restart $\ell$ is at most $W$, and that $\E_{\nu,\mcal{A}}^{A_{\mrm{UCB}}} [T_{b}^{A_{\mrm{UCB}}} - \ell + 1 | \mcal{F}_{\nu-1}]$ is roughly equal to $\E_{1,\mcal{A}}^{A_{\mrm{UCB}}} [T_{b}^{A_{\mrm{UCB}}}]$, as the observations after the restart $\ell$ under $\mbb{P}_{\nu,\mcal{A}}^{A_{\mrm{UCB}}}$ and all observations under $\mbb{P}_{1,\mcal{A}}^{A_{\mrm{UCB}}}$ follow the same post-change distribution. Next, we upper bound $\E_{1,\mcal{A}}^{A_{\mrm{UCB}}} [T_{b}^{A_{\mrm{UCB}}} ]$. While our precise argument relies on
Proposition 1 in \cite{veeravalli2024quickest}, we provide some intuition here. To do so, consider the notation,
\begin{gather}\label{eq:Sum_LLR}
    Y_{i}^{A_{\mrm{UCB}}} \coloneqq \sum_{j=(i-1)W + 1}^{iW} \mrm{LLR}\lp A_{j}, X_{A_{j}, j}\rp, i \in \mbb{N}.
\end{gather}
Thus, $Y_i$ is the sum of the LLR over the $i$-th $W$-step window. Let $\bar{C}_{iW}^{A_{\mrm{UCB}}} := \sum_{j=1}^{iW} Y_{j}^{A_{\mrm{UCB}}}$ accumulate these $Y_{j}^{A_{\mrm{UCB}}}$'s across windows up to the $i$-th one. With this notation, notice that 
\begin{equation}
    T_{b}^{A_{\mrm{UCB}}} \leq W \inf \lbp i \in \mbb{N}: \bar{C}_{iW}^{A_{\mrm{UCB}}} \geq b \rbp
\end{equation}
%
% Since $C_{n}^{A_{\mrm{UCB}}} \geq \bar{C}_{n}^{A_{\mrm{UCB}}}$ for any $n \in \mbb{N}$, $T_{b}^{A_{\mrm{UCB}}}$ is upper bounded by $T'$. 
% Then, by Wald's identity, the expected value of $T'$ is approximately $bW$ divided by the expected value of $Y_{i}^{A_{\mrm{UCB}}}$'s.
To obtain the expected hitting time of $\bar{C}^{A_{\mrm{UCB}}}_{iW}$'s, we view the LLR's as rewards for an UCB algorithm in \eqref{eq:UCB_Actions} for which the expectation of $Y_i^{A_{\mrm{UCB}}}$, given the history $\mcal{F}_{(i-1)W}$, precisely equals the maximum possible cumulative reward $WI_{\mcal{A}}$ less the regret of the UCB algorithm. We borrow an upper bound on that regret from Theorem 7.1 in  \cite{lattimore2020bandit}. In turn, this step provides a minimum expected positive drift on $Y_{i}^{A_{\mrm{UCB}}}$'s using regret analysis of the UCB algorithm, with which we apply the level crossing result in Proposition 1 for $\bar{C}^{A_{\mrm{UCB}}}_{iW}$ from \cite{veeravalli2024quickest}. Putting it together, we obtain
\begin{equation}
    \!\!\E_{1,\mcal{A}}^{A_{\mrm{UCB}}} \!\!\lb T_{b}^{A_{\mrm{UCB}}} \rb \!\leq\! \frac{W \lp b + c \lp 1 + \sqrt{b} \rp \rp}{WI_{\mcal{A}} -  \sum_{a: \Delta_{a} > 0} \lp 3\Delta_{a} + \frac{16 \log W}{\Delta_{a}} \rp}.\nonumber
\end{equation}
The rest follows from taking $\gamma \to \infty$ with $W = o(\log \gamma)$.

UCB-CuSum is in fact asymptotically optimal in the regime $\gamma \to \infty$; optimality follows from a known lower bound on the worst-case expected delay $\mcal{J}_{\mcal{A}}$ of any multichannel bandit QCD procedure that satisfies the false alarm constraint. More precisely, as $\gamma \to \infty$,
\begin{equation} \label{eq:WADD_low}
\inf_{\lp A, T \rp: \E_{\infty}^{A} \lb T \rb \geq \gamma} \mcal{J}_{\mcal{A}} \lp A, T \rp \geq \frac{\log \gamma}{I_{\mcal{A}}} \lp 1 + o \lp 1 \rp \rp,
\end{equation}
according to Theorem 1 in \cite{veeravalli2024quickest}.

\subsection{Per-Action-UCB-CuSum Procedure}

The CuSum-like statistic in \eqref{eq:CuSum_stat} accumulates LLRs associated with different pairs of pre- and post-change distributions. This is different from the original CuSum statistic, which only involves a single pair of pre- and post-change distributions. Now, we study an alternate CuSum statistic that is defined by the following recursion: for any $a \in \lb K \rb$ and $n \in \mbb{N}$,
\begin{align}\label{eq:CuSum_stat_sep}
    C_{a,n}^{A} \coloneqq \begin{dcases}
        \lp C_{a, n-1}^{A} \rp^{+} + \mrm{LLR} \lp a, X_{a, n} \rp,& A_{n} = a\\
        C_{a, n-1}^{A},& \mrm{otherwise}
    \end{dcases}
\end{align}
with $C^{A}_{a,0} \coloneqq 0$ and $\mrm{LLR}$ defined in \eqref{eq:LLR}. The change detection procedure using the statistic in \eqref{eq:CuSum_stat_sep} flags a change whenever one of these statistics surpasses a constant threshold $b$, i.e.,
\begin{equation}\label{eq:CuSum_T_sep}
    \tilde{T}_{b}^{A} \coloneqq \inf \lbp n \in \mbb{N}: C_{a,n}^{A} \geq b \mrm{\:for\:some\:} a \in \lb K \rb \rbp.
\end{equation}
Combining with the UCB algorithm, we propose the Per-Action (PA)-UCB-CuSum procedure $\lp A_{\mrm{UCB}}, \tilde{T}_{b}^{A_{\mrm{UCB}}} \rp$, which computes a CuSum statistic in \eqref{eq:CuSum_stat_sep} for each action separately. The PA-UCB-CuSum procedure is illustrated in Algorithm \ref{alg:PA_UCB_CuSum}. This algorithm is motivated by the fact that it generalizes better to the case when the pre- and post-change distributions are not known, and the statistics must be computed using generalized LLRs, as we illustrate later in Section \ref{sec:unknown}.

\begin{algorithm}[htb]
  \caption{PA-UCB-CuSum Procedure}
  \label{alg:PA_UCB_CuSum}
  \begin{algorithmic}
    \STATE {\bfseries Input:} Threshold $b$, Interval length $W$
    \STATE Initialize $n \leftarrow 0$ and $C_{a} \leftarrow 0$ for all $a \in \lb K \rb$.
    \REPEAT
    \IF{$n \mod W = 0$}
    \STATE Set $N_{a} \leftarrow 0$, $\hat{\mu}_{a} \leftarrow 0$, $\mrm{UCB}_{a} \leftarrow \infty$ for $a \in \lb K \rb$.
    \ENDIF
    \STATE Choose action $A \in \argmax_{a \in \lb K \rb} \mrm{UCB}_{a}$.
    \STATE Receive observation $X$.
    \STATE Update statistic $C_{A} \leftarrow \max\lbp C_{A}, 0 \rbp + \mrm{LLR}\lp A, X \rp$.
    \STATE Compute reward $R \leftarrow \mrm{LLR} \lp A, X \rp$.
    \STATE Update mean reward $\hat{\mu}_{A} \leftarrow \lp N_{a} \mu + R \rp/\lp N_{a} + 1 \rp$.
    \STATE Update index $\mrm{UCB}_{A} \leftarrow \hat{\mu}_{A} + \sqrt{4v\log W/N_{A}}$.
    \STATE Update number of pulls $N_{A} \leftarrow N_{A} + 1$.
    \STATE $n \leftarrow n + 1$.
    \UNTIL{$C_{a} \geq b$ for some $a \in \lb K \rb$}
  \end{algorithmic}
\end{algorithm}

Our next result resonates the same message as in Theorem \ref{thm:UCB_CuSum_opt} for the PA-UCB-CuSum procedure, claiming its asymptotic optimality under the false alarm constraint in \eqref{eq:QCD}.
\begin{theorem}[Asymptotic optimality of PA-UCB-CuSum]\label{thm:PA_UCB_CuSum_opt}
With $b = \log \gamma$, $\E_{\infty}^{A_{\mrm{UCB}}} \lb \tilde{T}_{b}^{A_{\mrm{UCB}}} \rb \geq \gamma$ and as $\gamma, W \to \infty$ with $W = o\lp \log \gamma \rp$, we have
\begin{equation}\label{eq:thm_PA_WADD}
\mcal{J}_{\mcal{A}} \!\lp \!A_{\mrm{UCB}}, \tilde{T}_{b}^{A_{\mrm{UCB}}} \!\rp \leq \frac{\log\gamma}{I_{\mcal{A}}} \lp 1 + o \lp 1 \rp \rp.
\end{equation}
\end{theorem}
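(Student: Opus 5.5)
We split the proof into the false-alarm bound and the delay bound, following the template of Theorem~\ref{thm:UCB_CuSum_opt} but replacing the pooled statistic by the per-action statistics.

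\emph{False alarm.} For each $a\in[K]$ we introduce the SR-like statistic $R_{a,n} \coloneqq (1+R_{a,n-1})\exp(\mrm{LLR}(a,X_{a,n}))$ when $A_n=a$, and $R_{a,n}\coloneqq R_{a,n-1}$ otherwise, with $R_{a,0}=0$. Since $A_n$ is $\mcal{F}_{n-1}$-measurable and $\E_\infty[\exp(\mrm{LLR}(a,X_{a,n}))\mid\mcal{F}_{n-1}]=1$ for every $a$, the aggregate $S_n \coloneqq \sum_{a\in[K]} R_{a,n} - n$ is a $\mbb{P}_\infty^{A_{\mrm{UCB}}}$-martingale. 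Indeed, at each step exactly one component is updated, and its conditional expected increment is $1$.

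We also have $R_{a,n}\geq e^{C_{a,n}}$, by the same induction used for the standard CuSum/SR comparison. Hence at $\tilde T_b$ some $R_{a,\tilde T_b}\geq e^{b}=\gamma$. Optional sampling, with the usual truncation and monotone convergence argument as in Lemma 1 of \cite{veeravalli2024quickest}, then gives
\[
\E_\infty[\tilde T_b] = \E_\infty\Big[\sum_a R_{a,\tilde T_b}\Big] \geq \gamma .
\]
If $\E_\infty[\tilde T_b]=\infty$ there is nothing to prove.

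\emph{Delay, reduction to $\nu=1$.} As in the sketch of Theorem~\ref{thm:UCB_CuSum_opt}, let $\ell$ be the first restart time at or after $\nu$, so $\ell-\nu<W$. After $\ell$ the UCB policy is freshly restarted and all observations are post-change. The CuSum statistics $C_{a,\ell-1}$ may be negative but are at least $\min_a \mrm{LLR}$ of the last sample; the clipping $(\cdot)^+$ removes that dependence after each action's first post-$\ell$ pull. Each per-action CuSum restarted from $(C_{a,\ell-1})^+\geq 0$ dominates the same CuSum started from $0$. This gives
\[
\mcal{J}_{\mcal{A}}\leq W+\E_{1,\mcal{A}}[\tilde T_b],
\]
uniformly over the conditioning on $\mcal{F}_{\nu-1}$.

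\emph{Delay, bounding $\E_{1,\mcal{A}}[\tilde T_b]$.} This is the main obstacle. The pooled sum of LLRs no longer controls stopping, because the LLRs of suboptimal actions feed into other statistics. Let $a^*\in\argmax_{a\in\mcal{A}}D(f_{a,1}\|f_{a,0})$. We bound $\tilde T_b$ by the time at which $C_{a^*,n}\geq b$. Now $C_{a^*,n}$ is at least the sum of LLRs from pulls of $a^*$ since any reference time. So we consider the per-window sums
\[
Z_i\coloneqq\sum_{j\in\mcal{I}_i}\mrm{LLR}(a^*,X_{a^*,j})\,\mbb{I}\{A_j=a^*\}.
\]
Their conditional mean is $I_{\mcal{A}}\,\E[N_{a^*,iW}\mid\mcal{F}_{(i-1)W}]$.

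The UCB analysis of Theorem 7.1 in \cite{lattimore2020bandit} gives
\[
\E[N_{a,iW}] \leq 3+\frac{16\log W}{\Delta_a^2}
\]
for every suboptimal $a$, where $\Delta_a = I_{\mcal{A}}-\mu_a$ and unaffected arms have mean $-D(f_{a,0}\|f_{a,1})<0$. Hence
\[
\E[Z_i\mid\mcal{F}_{(i-1)W}]\geq I_{\mcal{A}}\Big(W-\sum_{a:\Delta_a>0}\big(3+16\log W/\Delta_a^2\big)\Big),
\]
uniformly in the past because of the restart. Arms tied with $a^*$ (where $\Delta_a=0$) need separate handling: we take the maximum over the statistics of optimal arms, or simply note that their pulls also feed CuSums with drift $I_{\mcal{A}}$. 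In that case we replace $Z_i$ by $\max$ over optimal arms, using pigeonhole with a factor of $1/|\text{tied}|$ that washes out asymptotically; we expect this to be the delicate bookkeeping.

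The $Z_i$ are conditionally sub-Gaussian by Assumption~\ref{assum:KL}. Applying Proposition 1 of \cite{veeravalli2024quickest} to the level crossing of $\sum_{i\leq k}Z_i$ then yields
\[
\E_{1,\mcal{A}}[\tilde T_b]\leq \frac{W\big(b+c(1+\sqrt b)\big)}{I_{\mcal{A}}\big(W-\sum_a(3+16\log W/\Delta_a^2)\big)}.
\]
Setting $b=\log\gamma$ and letting $W\to\infty$ with $W=o(\log\gamma)$, the denominator is $WI_{\mcal{A}}(1-o(1))$ and the additive $W$ term is $o(\log\gamma)$. This gives \eqref{eq:thm_PA_WADD}.
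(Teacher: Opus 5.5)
Your argument follows the paper's proof almost step for step. The false-alarm bound uses per-action SR statistics whose sum minus $n$ is a $\mathbb{P}_\infty$-martingale. The delay bound reduces to $\nu=1$ via the restart at $\ell$, at the cost of an additive $W$. It then sums the LLRs of $a^*$ over each window and combines the UCB pull-count bound with Proposition~1 of \cite{veeravalli2024quickest}. In two places you are more careful than the paper. You choose $a^*$ inside $\mathcal{A}$, whereas the paper takes the argmax over $[K]$. You also use $-D(f_{a,0}\|f_{a,1})$ as the post-change mean reward of unaffected arms.

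The step that fails is your treatment of ties. Suppose $m\ge 2$ arms of $\mathcal{A}$ attain $I_{\mathcal{A}}$. Pigeonhole then only gives $\max_a C_{a,n}\gtrsim I_{\mathcal{A}}\,n/m$, i.e.\ a delay of order $m\,b/I_{\mathcal{A}}$. The factor $m$ multiplies the leading term, so it does not wash out as $\gamma\to\infty$.

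No finer bookkeeping can fix this, because the claimed bound is false in that case. Take two affected arms with identical pre- and post-change densities.
\begin{itemize}
\item Under $\mathbb{P}_{1,\mathcal{A}}$ the restarts make the windows i.i.d., and by symmetry each arm gets $W/2$ expected pulls per window.
\item The per-action statistics are never reset, and detection takes about $b/(I_{\mathcal{A}}W)\to\infty$ windows (this is where $W=o(b)$ enters). The law of large numbers across windows therefore gives each arm a fraction $1/2+o(1)$ of all pulls.
\item Hence each per-action CuSum grows at rate about $I_{\mathcal{A}}/2$ per step, and $\mathbb{E}_{1,\mathcal{A}}[\tilde T_b]=(2b/I_{\mathcal{A}})(1+o(1))$.
\end{itemize}
The pooled statistic of Theorem~\ref{thm:UCB_CuSum_opt} absorbs tied arms harmlessly, which is why that result is unaffected.

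So you must add the hypothesis that $\max_{a\in\mathcal{A}}D(f_{a,1}\|f_{a,0})$ is attained by a unique arm. With it, $\mathbb{E}[N_{a^*}]\ge W-\sum_{a\ne a^*}(3+16\log W/\Delta_a^2)$ holds and the rest of your proof goes through. The paper's proof makes the same lower bound on $\mathbb{E}[N_{a^*}]$ while summing only over $\Delta_a>0$, so it tacitly relies on this uniqueness too.

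There is also a smaller omission that you share with the paper. Applying Theorem~7.1 of \cite{lattimore2020bandit} to unaffected arms needs the LLR to be sub-Gaussian under $f_{a,0}$, which Assumption~\ref{assum:KL} does not provide. In addition, the pull-count bound should carry the factor $v$ coming from the index width $\sqrt{4v\log W/N_{a,n}}$.
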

\vspace{3pt}
% \begin{proof}[Proof Sketch]
The proof of the upper bound on the worst-case expected delay $\mcal{J}_{\mcal{A}} (A_{\mrm{UCB}}, \tilde{T}^{A_{\mrm{UCB}}}_{b})$ follows similar steps as in the proof of Theorem \ref{thm:UCB_CuSum_opt}. However, we cannot apply Lemma 1 in \cite{veeravalli2024quickest} to argue the lower bound on the false alarm constraint. 
% that $\E_{\infty}^{A_{\mrm{UCB}}} \lb \tilde{T}_{b}^{A_{\mrm{UCB}}} \rb \geq \gamma$, since the CuSum-like statistic in \cite{veeravalli2024quickest} accumulates LLRs from all actions. 
Instead, we construct an SR-like statistic for each arm separately, and then show that the difference between the time-step and the sum of all SR-like statistics forms a martingale. Then, the rest of the steps follow the same line in Lemma 1 in \cite{veeravalli2024quickest}. The formal proof is included in Appendix \ref{sec:thm3}.

Again, Theorem 1 in \cite{veeravalli2024quickest} certifies the asymptotic optimality of the PA-UCB-CuSum procedure. In other words, asymptotically speaking ($\gamma \to \infty$), the per-action and the accumulated UCB-based CuSUM procedures have similar performance. We study their differences empirically, and specifically in the case where the distributions are unknown with generalized LLRs, in the next section.

\section{Experimental Study}
\label{sec:experiments}
\begin{figure*}[ht]
    \centering
    \begin{minipage}{0.48\textwidth}
        \centering
        % Change width to 1.0 to remove the 20% internal padding
        \includegraphics[width=0.7\linewidth]{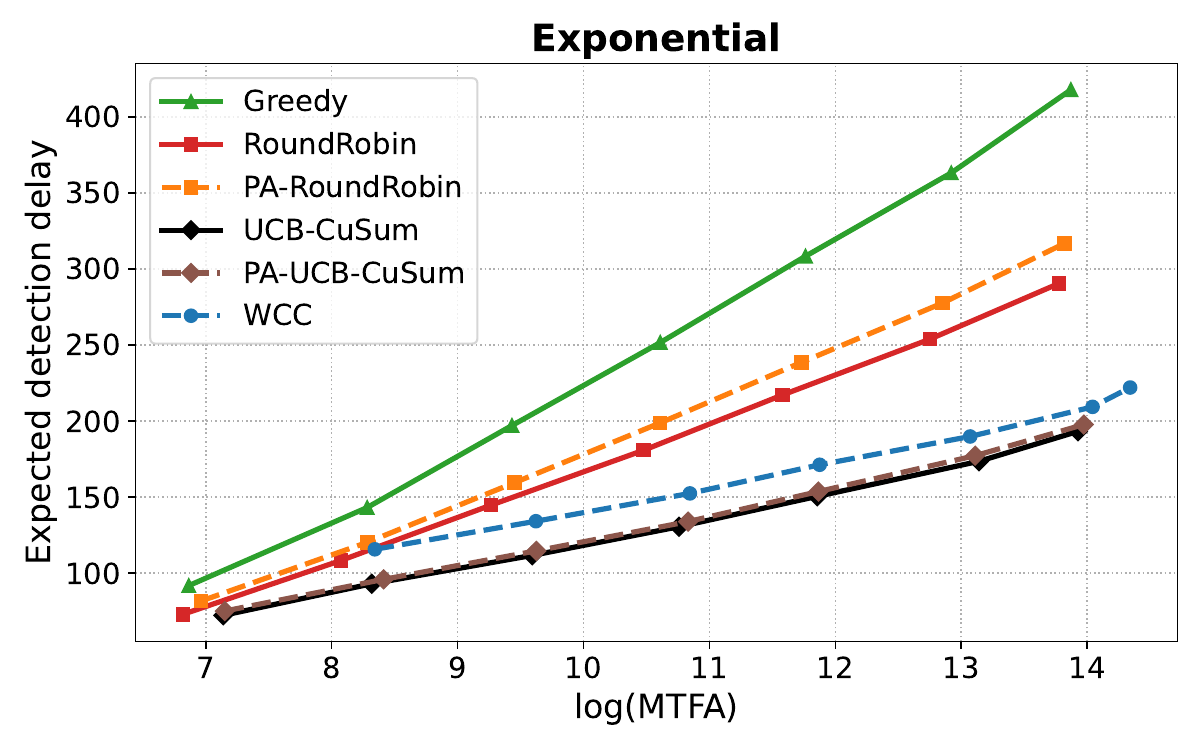}
    \end{minipage}
    \hspace{-1.8cm} % <--- Pulls the right column toward the left
    \begin{minipage}{0.48\textwidth}
        \centering
        \includegraphics[width=0.7\linewidth]{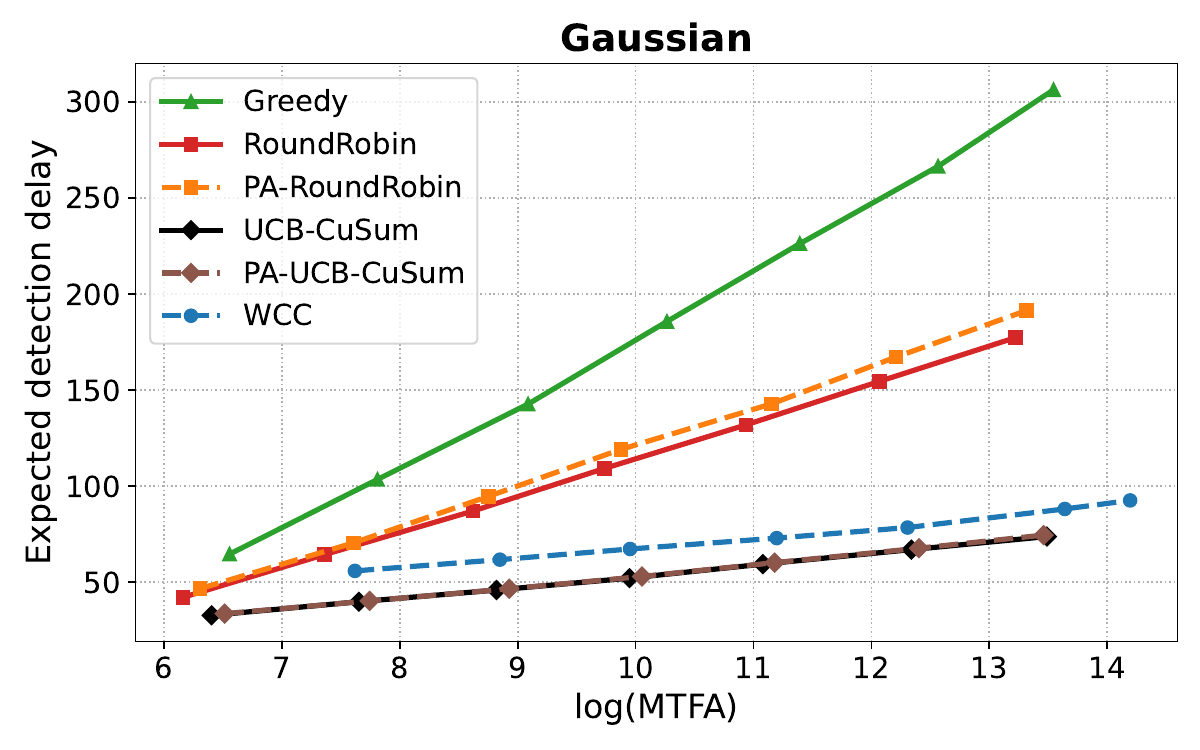}
    \end{minipage}

    %\vspace{-0.2cm} % Pulls the bottom row up slightly

    \begin{minipage}{0.48\textwidth}
        \centering
        \includegraphics[width=0.7\linewidth]{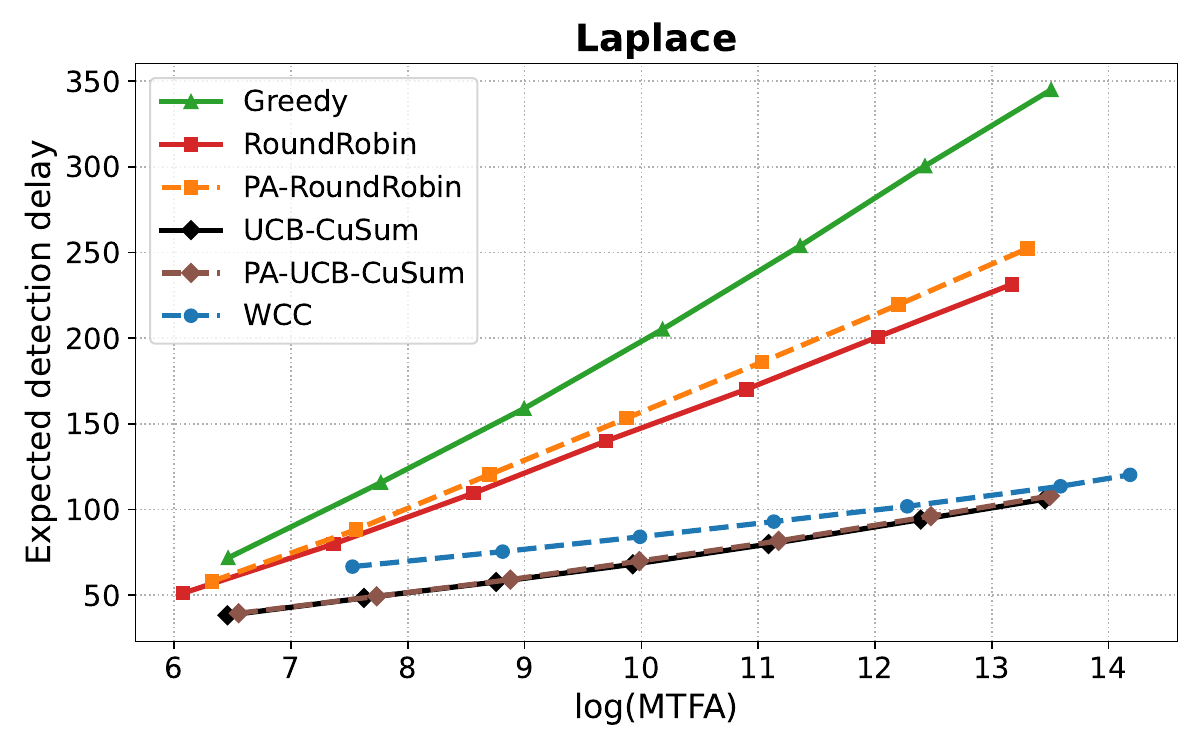}
    \end{minipage}
    \hspace{-1.8cm} % <--- Pulls the right column toward the left
    \begin{minipage}{0.48\textwidth}
        \centering
        \includegraphics[width=0.7\linewidth]{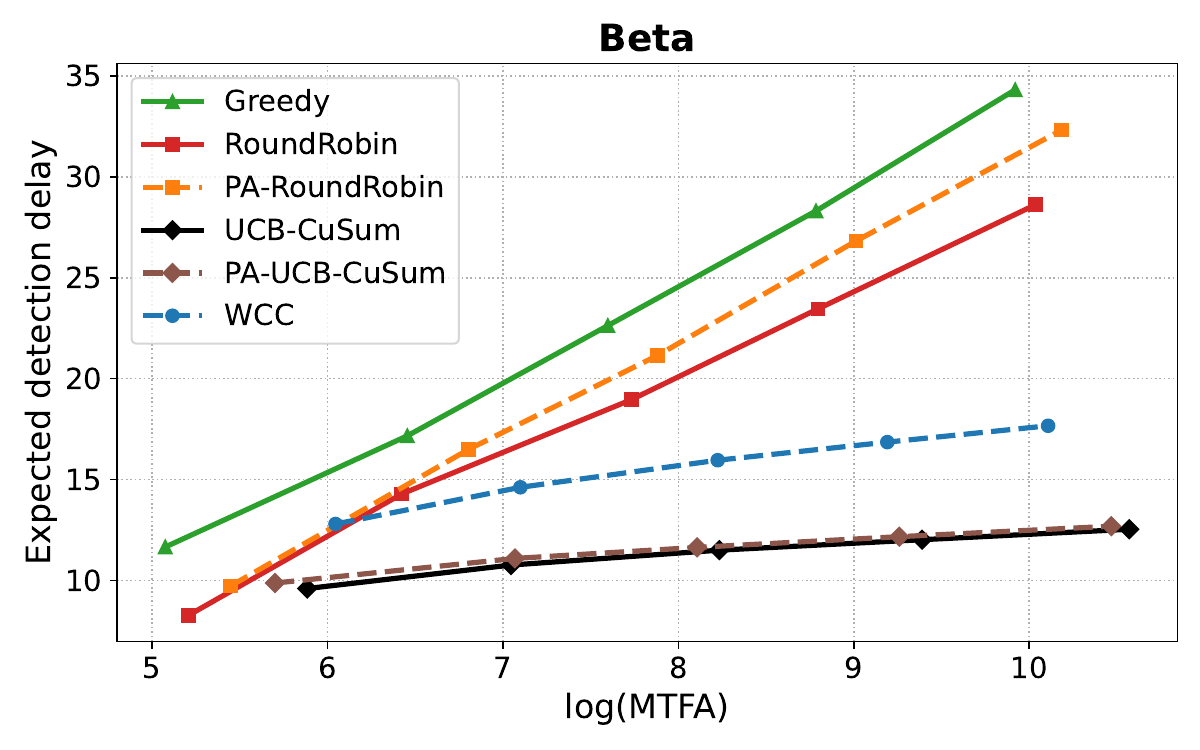}
    \end{minipage}

    \caption{Expected detection delay versus $\log(\mathrm{MTFA})$ (lower=better).
    Top: Exponential (left), Gaussian (right).
    Bottom: Laplace (left), LogNormal (right).}
    \label{fig:delay-mtfa}
\end{figure*}

In this section, we numerically evaluate how UCB-CuSum and the PA-UCB-CuSum algorithms on the multichannel bandit QCD problem using synthetic data. We further compare their performance against the leading approaches from prior works. 
% perform an experimental study for the studied . The experiments aim to highlight the efficacy of our two procedures under different kinds of channel densities.

% both in terms of performance as well as theoretical optimality.
\subsection{Algorithmic Benchmarks}
We compare against four baselines.
First, we include \emph{WCC}~\cite{veeravalli2024quickest}, the state-of-the-art procedure that is first-order optimal and empirically strong.
Second, we consider the \emph{Greedy} procedure (as in~\cite{xu2023asymptotic}), which samples a single component while maintaining a CuSum statistic: it repeatedly pulls the current component until its cumulative LLR either crosses the detection threshold (in which case it declares an alarm) or resets to zero (in which case it discards past samples and switches to the next component).
This method is known to enjoy asymptotic optimality guarantees in special cases, e.g., when only one component is affected or when all affected components share the same post-change distribution \cite{xu2021optimum,chaudhuri2021sequential}.
Finally, as an additional benchmark, we include \emph{RoundRobin}. This algorithm cyclically samples all components and aggregates the resulting LLRs into a single global statistic, declaring an alarm once this statistic exceeds its threshold. To make a comparison with the per-action variant of UCB-CuSum, we also include a per-action RoundRobin (PA-RoundRobin) which maintains a unique statistic for each action. Greedy and RoundRobin algorithms offer no optimality guarantees.

\subsection{Experimental Setup} 
We consider a bandit QCD problem with $K=10$ independent action-channels, where selecting action $a\in[K]$ at time $n$ yields an observation $X_{a,n}$. We consider four different observation distributions:  (i) Gaussian, (ii) Exponential, (iii) Laplace, and (iv) Beta, where only the last distribution family yields bounded observations in $[0,1]$.
% , enabling controlled dynamic-range rewards and a complementary regime to the unbounded models above.
In the pre-change regime ($n<\nu$), the observations are i.i.d. across time with
$X_{a,n}\sim\mathcal{N}(0,1)$ in the Gaussian case,
$X_{a,n}\sim\mathrm{Exp}(1)$ (mean $1$) in the Exponential case,
$X_{a,n}\sim\mathrm{Laplace}(0,1)$ in the Laplace case,
and for Beta, $X_{a,n}\sim\mathrm{Beta}(\alpha,\beta)$ with and mean equal to $0.01$, where throughout we set $\alpha+\beta=2$.
At the change-point $\nu$, for all cases except Beta (since it is bounded in $[0,1]$) a sparse subset of the channels becomes anomalous according to the shift vector
\begin{equation*}
\xi = [0,\,0,\,0.1,\,0,\,0,\,0.1,\,0,\,0,\,1,\,0]^\top .
\end{equation*}
and the shift is added to the corresponding actions. With the Beta distributed observations, we replace $0.1$ with $0.04$ and $1$ with $0.19$.
This design is intentionally \emph{sparse} and \emph{heterogeneous}: multiple components change, yet the magnitudes are unequal (two mild changes at level $0.1$ and one strong change at level $1$).
As a result, greedy selection rules can be brittle: once the Greedy policy locks onto a mildly-changed channel, it may continue sampling it and accumulate evidence too slowly, thereby delaying discovery of the strongly-changed channel.
This behavior is consistent with the fact that Greedy does not generally admit asymptotic optimality guarantees in such heterogeneous sparse settings, even though it can perform well empirically in simpler regimes \cite{veeravalli2024quickest}.
Notice that RoundRobin continues to probe all channels, preventing permanent lock-in; in our setting, the presence of a sufficiently strong change ($\xi_9=1$) suggests that periodic exploration can yield informative post-change samples and enables better detection compared to Greedy.

\begin{figure*}[ht]
    \centering
    \includegraphics[width=0.55\linewidth]{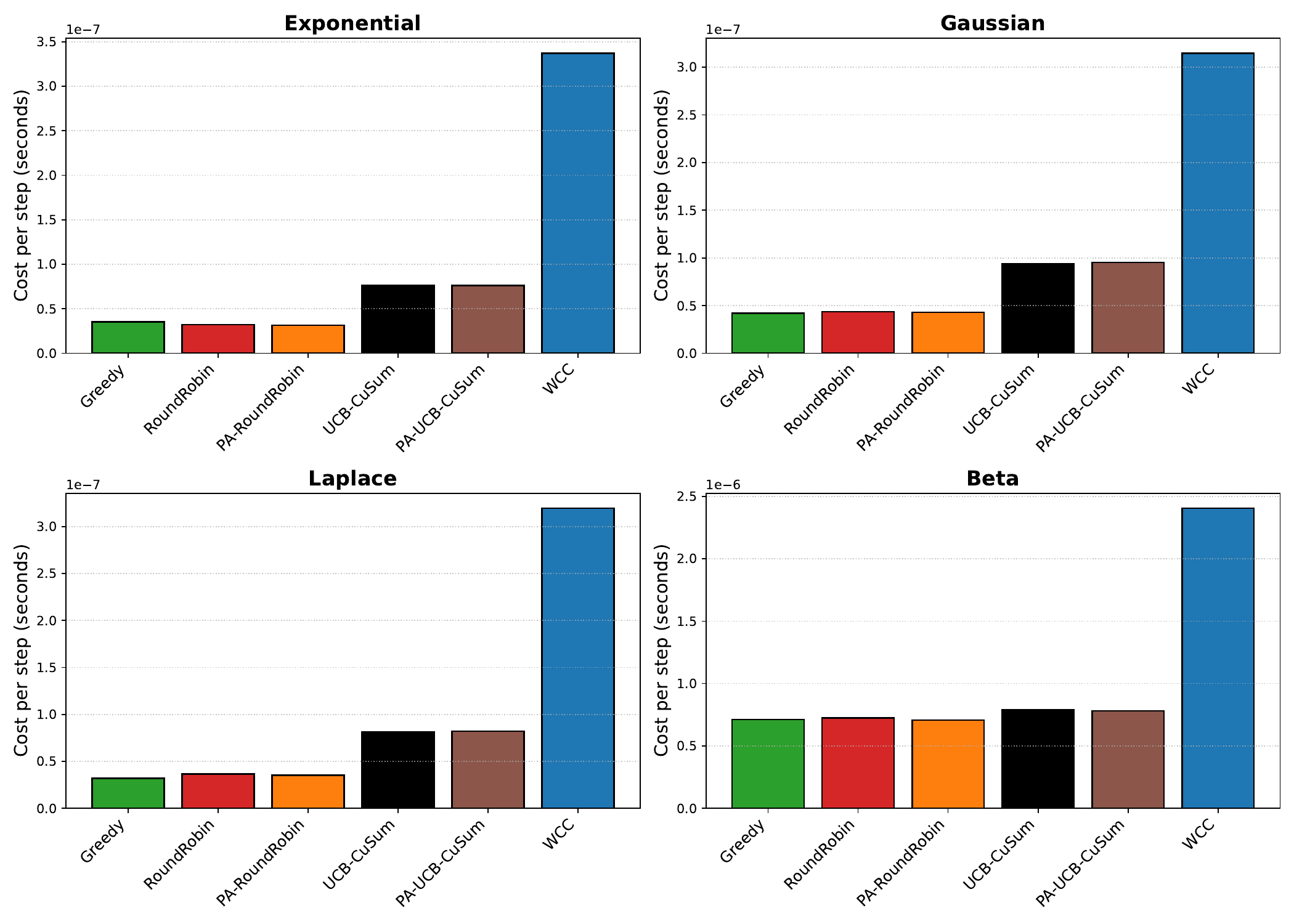}
    \caption{Average computational cost per iteration (seconds, lower=better). Top: Exponential (left), Gaussian (right).
    Bottom: Laplace (left), LogNormal (right).}
    \label{fig:comp-cost}

\end{figure*}

\subsection{Experimental Parameters and Metrics} 
We follow the parameter choices recommended in the original works, setting $w=\lceil 5\log b \rceil$ and $q=\lceil \log w\rceil$ for WCC. As per our theory, we set $W=\lceil 8\log b \rceil$ for Bandit-CuSum and PA-Bandit-CuSum. 
Greedy and RoundRobin do not have associated parameter choices.

Performance is evaluated via Monte-Carlo simulation in terms of (i) mean time to false alarm (MTFA) and (ii) expected detection delay, averaging over $200$K independent trials.
To generate MTFA--delay curves, we sweep the detection threshold to obtain different operating points; in practice, this can be done by selecting target MTFA levels $\gamma$ and setting $b=\log\gamma$.
In all figures, we report expected detection delay versus $\log(\mathrm{MTFA})$. In addition to statistical performance, we also report computational efficiency, quantified as the average \emph{cost per step} (wall-clock time per algorithm iteration), averaged over simulation runs.

\subsection{Experimental Results} 
Figure~\ref{fig:delay-mtfa} reports the trade-off between expected detection delay and $\log(\mathrm{MTFA})$. Across the different observation models, UCB-CuSum and PA-UCB-CuSum attain the best performance, achieving the smallest delay for a given MTFA level, consistently improving over WCC and substantially outperforming the Greedy and RoundRobin baselines. This behavior is consistent with heterogeneous sparse-change design: Greedy can lock onto mildly-changed channels and thus accumulates evidence slowly, while RoundRobin allocates a fixed fraction of samples to unaffected channels (which improves over Greedy), leading to larger delay. 

In contrast, UCB-CuSum adaptively concentrates sampling on informative channels with the largest changes, while retaining sufficient exploration to avoid lock-in. UCB-CuSum slightly outperforms its per-action variant PA-UCB-CuSum, as expected since it aggregates all observations into a single detection statistic. Importantly however, the performance gap between UCB-CuSum and PA-UCB-CuSum is small, indicating strong robustness to per-action decomposition. In contrast, PA-RoundRobin exhibits a substantially larger degradation relative to RoundRobin.

Figure~\ref{fig:comp-cost} compares computational cost per iteration (seconds per time-step).
All methods are highly efficient in absolute terms (on the order of $10^{-6}-10^{-7}$ seconds per step in our implementation). As expected, Greedy and RoundRobin are the cheapest per iteration, while WCC incurs the highest per-step cost; UCB-CuSum and PA-UCB-CuSum lie in between, offering a favorable accuracy-efficiency trade-off in practice. Notably, UCB-CuSum and PA-UCB-CuSum are markedly more efficient than WCC, achieving substantially lower per-step runtime while also attaining better statistical performance. 

\section{Towards Unknown Distributions} \label{secHo:unknown}
\label{sec:unknown}
The procedures above rely on the knowledge of the distributions  $(f_{a,0},f_{a,1})$ to compute  $\mathrm{LLR}(a,x)$ and the CuSum statistics. In a variety of applications, such distributions are not known. 
As a first step toward that scenario, we replace $\mathrm{LLR}$ by a per-action generalized likelihood ratio (GLR) statistic and retain the same
UCB-driven controlled sensing template for QCD. Specifically, we use the Bernoulli GLR statistic of \cite{besson2022efficient}:
given observations $X_1,\dots,X_n\in[0,1]$, define $\hat\mu_{t_1:t_2}:=\frac{1}{t_2-t_1+1}\sum_{t=t_1}^{t_2}X_t$ and $
\mathrm{kl}(p,q):=p\ln\!\frac{p}{q}+(1-p)\ln\!\frac{1-p}{1-q}.$
Then the GLR statistic $\mathrm{GLR}(n)$ is
\[
\max_{1\le s< n}\{\, s\,\mathrm{kl}(\hat\mu_{1:s},\hat\mu_{1:n})
+(n-s)\,\mathrm{kl}(\hat\mu_{s+1:n},\hat\mu_{1:n})\}.
\]
Based on the statistic, it is evident that observations need to share the same distribution in the segments used for the empirical mean. Therefore, the WCC and UCB-CuSum procedures cannot readily exploit the GLR statistic, as the statistic used for change detection combines likelihood ratios from different data streams, each of which can have different distributions, and hence, different GLRs. To address this challenge, we consider a  \emph{per-action} variant, where we maintain a different GLR statistic for each action, for which observations follow a  specific pre- and post-change distribution pair. We evaluate its performance empirically.

We modify the PA-UCB-CuSum procedure and utilize the Bernoulli GLR statistic in place of the LLR. We call it the PA-UCB-GLR procedure. The reward of each arm in this case is defined as the GLR statistic using the samples of that arm, normalized by the number of samples belonging to that specific arm. In order to estimate a measure of variance, we compute the empirical variance of the differences of the GLR statistics at each time step. Concretely, let $L_a$ be the total number of pulls of arm $a$ (not reset), and let $G_a(m)$ denote the GLR statistic computed from the first $m$ samples of arm $a$.
We use the normalized reward
\[
R_a(m):=\frac{G_a(m)}{m}.
\]
To quantify variability, define increments $\Delta G_a(m):=G_a(m)-G_a(m-1)$ for $m\ge2$ and the empirical variance
\begin{align*}
\widehat{\sigma}^2_a(L_a)
&:= \frac{1}{L_a-2}\sum_{m=2}^{L_a}\Big(\Delta G_a(m)-\overline{\Delta G}_a(L_a)\Big)^2,
\\
\overline{\Delta G}_a(L_a)
&:=\frac{1}{L_a-1}\sum_{m=2}^{L_a}\Delta G_a(m).
\end{align*}

\begin{algorithm}[htb]
    \small
  \caption{PA-UCB-GLR Procedure}
  \label{alg:PA_UCB_B_GLR}
  \begin{algorithmic}
    \STATE {\bfseries Input:} threshold $b$, interval length $W$.
    \STATE Initialize $n\leftarrow 0$. For all $a\in[K]$: $L_a\leftarrow 0$, $G_a\leftarrow 0$.
    \STATE For all $a\in[K]$: $N_a\leftarrow 0$, $\widehat\mu_a\leftarrow 0$, $\widehat\sigma_a^2\leftarrow 0$, $\mathrm{UCB}_a\leftarrow +\infty$.
    \REPEAT
      \IF{$n \bmod W = 0$}
        \STATE Reset $(N_a,\widehat\mu_a,\widehat\sigma_a^2,\mathrm{UCB}_a)$ for all $a\in[K]$
      \ENDIF
      \STATE Choose action $A \in \arg\max_{a\in[K]} \mathrm{UCB}_a$.
      \STATE Receive observation $X$ from channel $A$.
      \STATE $L_A \leftarrow L_A + 1$.
      \STATE Update $G_A \leftarrow \mathrm{GLR}_A(L_A)$
      \STATE Set reward $R \leftarrow G_A/L_A$.
      \STATE Update $\widehat\sigma_A^2 \leftarrow \widehat\sigma_A^2(L_A)$
      \STATE Update $N_A,\widehat\mu_A$ using reward $R$.
      \STATE Update $\mathrm{UCB}_A \leftarrow \widehat\mu_A + \sqrt{(2\,\widehat\sigma_A^2 \log W)/N_A}$.
      \STATE $n \leftarrow n + 1$.
    \UNTIL{$G_a \ge b$ for some $a\in[K]$}
  \end{algorithmic}
\end{algorithm}
% \begin{figure}[ht]
%     \centering
%     \includegraphics[width=0.5\linewidth]{CDC2026/figures/qcd_glr.pdf}
%     \caption{Expected detection delay versus $\log(\mathrm{MTFA})$ (lower=better) for PA-RoundRoubin and PA-UCB using the GLR detector with Bernoulli observations.}
%     \label{fig:delay-mtfa-glr-bernoulli}
% \end{figure}

\begin{figure}[ht]
    \centering
    \includegraphics[width=0.9\linewidth]{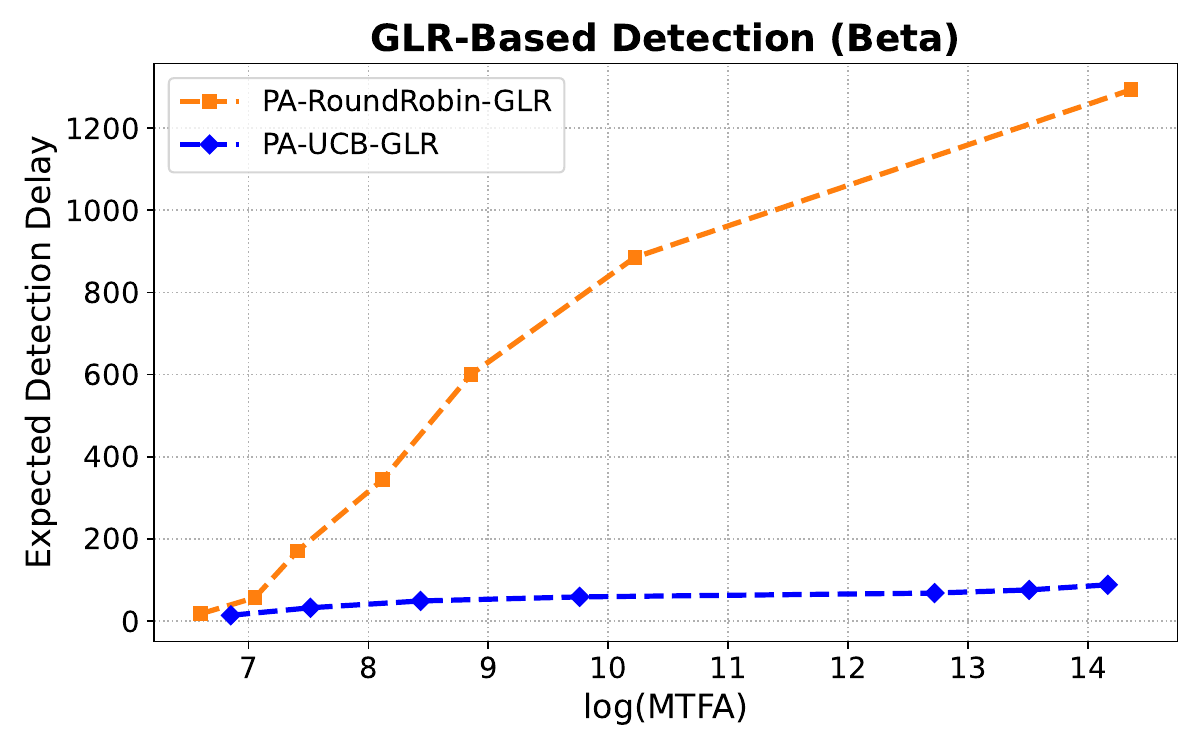}
    \caption{Expected detection delay versus $\log(\mathrm{MTFA})$ (lower=better) for PA-RoundRoubin and PA-UCB using the GLR detector with Beta observations.}
    \label{fig:delay-mtfa-glr-beta}
\end{figure}

To evaluate our the PA-UCB-GLR procedure, we employ the experiments with Beta distributions used in Section \ref{sec:experiments} and compare with a GLR variant of the PA-RoundRobin.
Since observations are bounded in $[0,1]$, we instantiate the per-arm statistic $G_a(\cdot)$ with the Bernoulli GLR, computed {separately} on each arm’s sample stream.
Both methods use the same stopping rule shown in the end of Algorithm (threshold $b$), and differ only in how samples are allocated across arms:
\emph{PA-RoundRobin-GLR} cycles through arms uniformly, whereas \emph{PA-UCB-GLR} (Alg.~\ref{alg:PA_UCB_B_GLR}) adaptively selects arms via a UCB index built from the normalized reward and the empirical variance of GLR increments.
Figure~\ref{fig:delay-mtfa-glr-beta} shows that PA-UCB-GLR achieves a significantly better MTFA--delay trade-off, with an increasingly larger advantage at higher MTFA: round-robin spends a fixed fraction of samples on weakly-informative arms, while PA-UCB-GLR concentrates sensing on the most informative arm, substantially reducing delay.

\section{Conclusions}
In this paper, we developed {simple and efficient} multichannel bandit QCD procedures that (i) achieve optimal MTFA--delay scaling in the known-model setting and (ii) extend naturally to \emph{unknown} pre/post distributions.
% ---motivated by detection-augmented learning in piecewise-stationary environments, where exploration actions act as ``channels'' and change detection triggers restarts.
Specifically, when the distributions of observations from each action is known, we proposed and analyzed \emph{UCB-CuSum} and \emph{PA-UCB-CuSum}, which treat per-sample LLRs as bandit rewards and used periodically-restarted UCB to select actions. Such a policy focuses  on more informative affected channel(s) quickly, while enforcing enough exploration to avoid lock-in to pre-change optimal actions. These procedures were shown to achieve first-order asymptotic optimality.
% for both procedures, matching the fundamental lower bound.
Across multiple observation models with sparse, heterogeneous changes, the our methods attained the best MTFA--delay trade-off, while remaining  computationally efficient relative to state-of-art. 
% UCB-CuSum gains slightly from aggregating evidence, but the small gap to \emph{PA-UCB-CuSum} shows strong robustness of the per-action decomposition
% ---useful in piecewise-stationary settings that favor modular, action-wise detectors. 
Finally, we introduced \emph{PA-UCB-GLR} that replaces per-action LLR with a Bernoulli GLR statistic, extending our algorithm design philosophy to distribution-agnostic setups. 
% ; since GLR requires homogeneous samples within each arm, this underscores the advantage of per-action designs for distribution-agnostic extensions. 
While we relegate its theoretical analysis to future work, our experiments confirmed its efficacy over round-robin action selection policy.

\bibliography{ref}
\bibliographystyle{IEEEtran}

%%%%%%%%%%%%%%%%%%%%%%%%%%%%%%%%%%%%%%%%%%%%%%%%%%%%%%%%%%%%%%%%%%%%%%%%%%%%%%%
%%%%%%%%%%%%%%%%%%%%%%%%%%%%%%%%%%%%%%%%%%%%%%%%%%%%%%%%%%%%%%%%%%%%%%%%%%%%%%%
% APPENDIX
%%%%%%%%%%%%%%%%%%%%%%%%%%%%%%%%%%%%%%%%%%%%%%%%%%%%%%%%%%%%%%%%%%%%%%%%%%%%%%%
%%%%%%%%%%%%%%%%%%%%%%%%%%%%%%%%%%%%%%%%%%%%%%%%%%%%%%%%%%%%%%%%%%%%%%%%%%%%%%%
\newpage
\onecolumn
\appendix
\section{Proof of Theorem \ref{thm:UCB_CuSum_opt}} \label{sec:thm1}

To prove that $\E_{\infty}^{A_{\mrm{UCB}}} \lb T_{\log \gamma}^{A_{\mrm{UCB}}} \rb \geq \gamma$, we can apply Lemma 1 in \cite{veeravalli2024quickest} and directly obtain the lower bound.

To prove \eqref{eq:thm_WADD}, we show that as $W, b \to \infty$ with $W = o(b)$,
\begin{align}\label{eq:UCB_CuSum_WADD_upp}
\mcal{J}_{\mcal{A}} \lp A_{\mrm{UCB}}, T_{b}^{A_{\mrm{UCB}}} \rp \leq \frac{b}{I_{\mcal{A}}}\lp 1 + o \lp 1 \rp \rp.
\end{align}
The proof of this upper bound on $\mcal{J}_{\mcal{A}}$ requires the following lemma.

\begin{lemma}\label{lem:nu_to_one}
For any change-point $\nu \in \mbb{N}$ and $\mcal{A} \subseteq \lb K \rb$,
\begin{equation}\label{eq:lem1}
    \E_{\nu,\mcal{A}}^{A_{\mrm{UCB}}} \lb \lp T_{b}^{A_{\mrm{UCB}}} - \nu + 1\rp^{+} \Big| \mcal{F}_{\nu-1} \rb \leq W + \E_{1,\mcal{A}}^{A_{\mrm{UCB}}} \lb T_{b}^{A_{\mrm{UCB}}} \rb.
\end{equation}
\end{lemma}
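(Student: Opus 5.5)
Let $\ell$ denote the first restart time at or after $\nu$, i.e., $\ell \coloneqq \min\{(j-1)W+1 : (j-1)W + 1 \geq \nu\}$, so that $\ell - \nu \leq W-1$. The plan is to split the delay at $\ell$:
\[
\lp T - \nu + 1\rp^{+} \leq (\ell - \nu) + \lp T - \ell + 1\rp^{+}\mbb{I}\{T \geq \ell\} \leq W + \lp T - \ell + 1\rp^{+}\mbb{I}\{T\ge \ell\},
\]
writing $T$ for $T_{b}^{A_{\mrm{UCB}}}$. Since $\ell$ is deterministic given $\nu$ and $\mcal{F}_{\nu-1}\subseteq \mcal{F}_{\ell-1}$, the tower property reduces the claim to
\[
\E_{\nu,\mcal{A}}^{A_{\mrm{UCB}}}\lb (T-\ell+1)^{+}\mbb{I}\{T\ge\ell\}\,\big|\,\mcal{F}_{\ell-1}\rb \leq \E_{1,\mcal{A}}^{A_{\mrm{UCB}}}\lb T\rb \quad \text{a.s.}
\]

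To prove this, I will couple the process after $\ell$ with a fresh process started at time $1$ under $\mbb{P}_{1,\mcal{A}}$. On $\{T \geq \ell\}$, the UCB policy restarts at $\ell$, so its actions $A_{\ell}, A_{\ell+1},\dots$ depend only on observations from time $\ell$ onward. This holds because $N_{a,n}$ and $\hat\mu_{a,n}$ in \eqref{eq:N_an}–\eqref{eq:mu_hat} use only indices $i \geq m = \ell$ within the interval, and later intervals restart again. All these observations are post-change and i.i.d. in time with the same law as under $\mbb{P}_{1,\mcal{A}}$. Hence the shifted sequence $(A_{\ell+k-1}, X_{A_{\ell+k-1},\ell+k-1})_{k\ge1}$, conditionally on $\mcal{F}_{\ell-1}$, has the same law as $(A_k, X_{A_k,k})_{k\ge1}$ under $\mbb{P}^{A_{\mrm{UCB}}}_{1,\mcal{A}}$. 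Ties in the argmax should be broken by a deterministic or independently randomized rule, so that this holds.

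Next I compare the statistics. Let $C'_k$ be the CuSum-like recursion \eqref{eq:CuSum_stat} started from $C'_0=0$ and driven by the shifted LLRs, and let $T'$ be its first passage of $b$. By induction, $C_{\ell-1+k} \geq C'_k$ for all $k$. Indeed, $(C_{\ell-1})^{+}\ge 0 = C'_0$, and the map $c\mapsto c^{+}+x$ is monotone, so the recursion preserves the ordering. Therefore, on $\{T\ge \ell\}$ we have $T-\ell+1 \leq T'$. Since the law of $T'$ given $\mcal{F}_{\ell-1}$ equals the law of $T$ under $\mbb{P}_{1,\mcal{A}}$, the bound follows, and taking ess sup gives \eqref{eq:lem1}.

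The main obstacle is making the distributional identity rigorous. The action rule must genuinely forget everything before $\ell$, including through the interval index, which is aligned because $\ell \equiv 1 \pmod W$. The conditioning on $\mcal{F}_{\ell-1}$ must also leave the post-$\ell$ observations independent with the post-change laws. I would handle this by realizing the observations as an i.i.d. array $\{X_{a,n}\}$ independent of the past, and writing $A_{\ell-1+k}$ as a fixed measurable function of $X_{\cdot,\ell},\dots,X_{\cdot,\ell+k-2}$ along the realized path. This is the same function that defines $A_k$ from time $1$. The monotonicity step is routine.
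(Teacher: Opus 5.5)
Your proposal is correct and follows essentially the same route as the paper. The paper's restarted statistic $C_{\ell:n}^{A_{\mrm{UCB}}}$ and stopping time $T_{b,\ell}^{A_{\mrm{UCB}}}$ are exactly your $C'_k$ and $\ell-1+T'$, and its proof uses the same four ingredients: the monotone comparison $C_{\ell:n}\le C_n$, the bound $\ell-\nu\le W$, independence of the post-$\ell$ quantities from $\mcal{F}_{\ell-1}$, and the identification of their law with that under $\mbb{P}_{1,\mcal{A}}^{A_{\mrm{UCB}}}$. Your explicit treatment of tie-breaking and of the alignment $\ell\equiv 1 \pmod W$ of the restart schedule makes rigorous a step the paper only asserts.
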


\begin{proof}
Let $\ell \coloneqq \argmin \lbp i \in \mbb{N} : i \geq \nu \mrm{\:and}\enspace i \mod W = 1 \rbp$ be the first time-step at which the UCB algorithm restarts after the change occurs. Then, for $n \geq \ell$, define the statistic with the recursion
\begin{equation}\label{eq:windowed_CuSum}
    C_{\ell:n}^{A_{\mrm{UCB}}} \coloneqq \max \lbp C_{\ell:n - 1}^{A_{\mrm{UCB}}}, 0 \rbp + \mrm{LLR} \lp A_{n}, X_{A_{n},n} \rp
\end{equation}
with $C_{\ell: \ell - 1}^{A_{\mrm{UCB}}} \coloneqq 0$. The statistic in \eqref{eq:windowed_CuSum} is the CuSum-like statistic that accumulates the LLRs of samples after time-step $\ell$. We also define a new stopping time that uses this statistic in place of the CuSum-like statistic in \eqref{eq:CuSum_stat}:
\begin{equation}\label{eq:windowed_T}
    T_{b, \ell}^{A_{\mrm{UCB}}} \coloneqq \inf \lbp n \geq \ell: C_{\ell:n}^{A_{\mrm{UCB}}} \geq b \rbp.
\end{equation}
Then,
\begin{align}
    \E_{\nu,\mcal{A}}^{A_{\mrm{UCB}}} \lb \lp T_{b}^{A_{\mrm{UCB}}} - \nu + 1\rp^{+} \Big| \mcal{F}_{\nu-1} \rb &\overset{(a)}{\leq} \E_{\nu,\mcal{A}}^{A_{\mrm{UCB}}} \lb \lp T_{b, \ell}^{A_{\mrm{UCB}}} - \nu + 1\rp^{+} \Big| \mcal{F}_{\nu-1} \rb \nonumber\\
    &= \E_{\nu,\mcal{A}}^{A_{\mrm{UCB}}} \lb T_{b, \ell}^{A_{\mrm{UCB}}} - \nu + 1 \Big| \mcal{F}_{\nu-1} \rb \nonumber\\
    &= \ell - \nu + \E_{\nu,\mcal{A}}^{A_{\mrm{UCB}}} \lb T_{b, \ell}^{A_{\mrm{UCB}}} - \ell + 1 \Big| \mcal{F}_{\nu-1} \rb \nonumber\\
    &\overset{(b)}{\leq} W + \E_{\nu,\mcal{A}}^{A_{\mrm{UCB}}} \lb T_{b, \ell}^{A_{\mrm{UCB}}} - \ell + 1 \Big| \mcal{F}_{\nu-1} \rb \nonumber\\
    &\overset{(c)}{=} W + \E_{\nu,\mcal{A}}^{A_{\mrm{UCB}}} \lb T_{b, \ell}^{A_{\mrm{UCB}}} - \ell + 1 \rb \nonumber\\
    &\overset{(d)}{=} W + \E_{\ell,\mcal{A}}^{A_{\mrm{UCB}}} \lb T_{b, \ell}^{A_{\mrm{UCB}}} - \ell + 1 \rb \nonumber\\
    &\overset{(e)}{=} W + \E_{1,\mcal{A}}^{A_{\mrm{UCB}}} \lb T_{b, 1}^{A_{\mrm{UCB}}} \rb \nonumber\\
    &\overset{(f)}{=} W + \E_{1,\mcal{A}}^{A_{\mrm{UCB}}} \lb T_{b}^{A_{\mrm{UCB}}} \rb. \label{eq:CuSum_Lem1_laststep}
\end{align}
In step $(a)$, $T_{b, \ell}^{A_{\mrm{UCB}}} \geq T_{b}^{A_{\mrm{UCB}}}$ almost surely, as we can show that $C_{\ell:n}^{A_{\mrm{UCB}}} \leq C_{n}^{A_{\mrm{UCB}}}$ for $n \geq \ell$ by induction. In step $(b)$, we leverage the fact that $ \ell - W \leq \nu$, as $\ell$ is the first time-step at which the UCB algorithm restarts after the change. In step $(c)$, because the statistic $C_{\ell:n}^{A_{\mrm{UCB}}}$ is independent of the samples prior to the time-step $\ell - 1$, we can utilize the fact that $T_{b, \ell}^{A_{\mrm{UCB}}}$ is independent of $\mcal{F}_{\ell - 1}$ and that $\mcal{F}_{\nu - 1} \subseteq \mcal{F}_{\ell - 1}$. Step $(d)$ follows from the fact that $X_{a,n}$ follows the post-change density $f_{a,1}$ for $n \geq \ell$ and $a \in \mcal{A}$ when the change occurs either at $\nu$ or at $\ell$. Step $(e)$ results from the fact that $X_{a,n}$ follows the post-change density $f_{a,1}$ for $n \geq \ell$ and $a \in \mcal{A}$ under $\mbb{P}_{\ell, \mcal{A}}^{A_{\mrm{UCB}}}$, and that $X_{a,n}$ follows the post-change density $f_{a,1}$ for $n \geq 1$ and $a \in \mcal{A}$ under $\mbb{P}_{1, \mcal{A}}^{A_{\mrm{UCB}}}$. In step $(f)$, $T_{b}^{A_{\mrm{UCB}}}=T_{b, 1}^{A_{\mrm{UCB}}}$ by the definitions in \eqref{eq:CuSum_stat} and \eqref{eq:windowed_CuSum}.
\end{proof}

With Lemma \ref{lem:nu_to_one}, we can derive that
\begin{align}
    \mcal{J}_{\mcal{A}} \lp A_{\mrm{UCB}}, T_{b}^{A_{\mrm{UCB}}} \rp \leq W + \E_{1,\mcal{A}}^{A_{\mrm{UCB}}} \lb T_{b}^{A_{\mrm{UCB}}} \rb.
\end{align}
To upper bound $\E_{1,\mcal{A}}^{A_{\mrm{UCB}}} \lb T_{b}^{A_{\mrm{UCB}}} \rb$, we need the following lemma.

\begin{lemma}\label{lem:E1_upp_UCB_CuSum}
For some constant $c > 0$ that does not depend on $b, I_{\mcal{A}}$, and $W$,
\begin{equation}
    \E_{1,\mcal{A}}^{A_{\mrm{UCB}}} \lb T_{b}^{A_{\mrm{UCB}}} \rb \leq \frac{W \lp b + c \lp 1 + \sqrt{b} \rp \rp}{WI_{\mcal{A}} -  \sum_{a: \Delta_{a} > 0} \lp 3\Delta_{a} + \frac{16 \log W}{\Delta_{a}} \rp}
\end{equation}
where $\Delta_{a} = I_{\mcal{A}} - D \lp f_{a,1} || f_{a,0} \rp$.
\end{lemma}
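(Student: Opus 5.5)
We reduce the CuSum stopping time to the first passage of a block-summed random walk and then apply the level-crossing bound of Proposition 1 in \cite{veeravalli2024quickest}. Under $\mbb{P}_{1,\mcal{A}}^{A_{\mrm{UCB}}}$ all observations are post-change. By induction on \eqref{eq:CuSum_stat}, $C_{n}^{A_{\mrm{UCB}}} \geq \sum_{j=1}^{n} \mrm{LLR}(A_j, X_{A_j,j})$ for every $n$. Hence, using the block sums $Y_i^{A_{\mrm{UCB}}}$ from \eqref{eq:Sum_LLR} and $\bar{C}_{iW}^{A_{\mrm{UCB}}} = \sum_{j\le i} Y_j^{A_{\mrm{UCB}}}$, we get $T_b^{A_{\mrm{UCB}}} \leq W \tau_b$, where $\tau_b \coloneqq \inf\{i : \bar{C}_{iW}^{A_{\mrm{UCB}}} \geq b\}$. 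It therefore suffices to show
\[
\E_{1,\mcal{A}}^{A_{\mrm{UCB}}}[\tau_b] \leq \frac{b + c(1+\sqrt{b})}{\mu_W}, \qquad \mu_W \coloneqq WI_{\mcal{A}} - \sum_{a:\Delta_a>0}\Bigl(3\Delta_a + \frac{16\log W}{\Delta_a}\Bigr).
\]

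\emph{Step 1: conditional drift via UCB regret.} Because the UCB policy restarts at each $(i-1)W+1$ and uses only rewards from the current window, the actions within window $i$ depend only on that window's observations, which are independent of $\mcal{F}_{(i-1)W}$. Thus the $Y_i^{A_{\mrm{UCB}}}$ are i.i.d. across windows under $\mbb{P}_{1,\mcal{A}}^{A_{\mrm{UCB}}}$, with the law of a single $W$-round UCB run. Within a window, action $a$ yields rewards $\mrm{LLR}(a,X)$. These have mean $\mu_a = D(f_{a,1}\|f_{a,0})$ for $a\in\mcal{A}$ and mean $-D(f_{a,0}\|f_{a,1}) < 0$ for $a\notin\mcal{A}$, so the optimal mean is $I_{\mcal{A}}$. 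By Wald, $\E[Y_i] = WI_{\mcal{A}} - \mathrm{Regret}_W$. Theorem 7.1 in \cite{lattimore2020bandit}, with the sub-Gaussian constant from Assumption~\ref{assum:KL} absorbed by the $4v\log W$ bonus in \eqref{eq:UCB_index}, gives $\mathrm{Regret}_W \le \sum_{a:\Delta_a>0}(3\Delta_a + 16\log W/\Delta_a)$, and therefore $\E[Y_i]\ge \mu_W$.

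Some care is needed here. For $a\notin\mcal{A}$ the LLR is computed under $f_{a,0}$, so its sub-Gaussianity is not literally what Assumption~\ref{assum:KL} states. I would either argue that the same bound holds via the change-of-measure identity, or treat the needed concentration as part of the standing assumptions. In the bound, $\Delta_a$ for such arms should be read as $I_{\mcal{A}} + D(f_{a,0}\|f_{a,1})$; alternatively one can note that the statement's $\Delta_a$ definition suffices for the asymptotics.

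\emph{Step 2: level crossing.} Since $\bar{C}$ is a random walk with i.i.d. increments of mean at least $\mu_W$, Wald's identity gives $\E[\tau_b]\,\E[Y_1] = \E[\bar{C}_{\tau_b W}] = b + \E[\text{overshoot}]$. Proposition 1 of \cite{veeravalli2024quickest} bounds this by $(b + c(1+\sqrt{b}))/\mu_W$, with the overshoot controlled through the second moment of $Y_i$. This is the main obstacle: the overshoot constant must not depend on $W$ even though $\mathrm{Var}(Y_i)$ grows like $W$. I would use the sub-Gaussian martingale structure of the per-step centered LLRs to get concentration at scale $\sqrt{n}$ in the number of steps $n$, rather than in the number of blocks. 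Bounding the excess of $\bar{C}$ over $b$ at the crossing index together with the deviation gives the $c\sqrt{b}$ term; essentially, one checks the hypotheses of Proposition 1 there. Multiplying by $W$ then yields the lemma.
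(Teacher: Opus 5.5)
Your outline follows the paper's proof step for step: $T_b^{A_{\mathrm{UCB}}}\le W\bar L_b$ via $\bar C_n\le C_n$, block increments $Y_i$ that are i.i.d.\ because of the restarts, the drift $\mathbb{E}[Y_i]=WI_{\mathcal A}-\mathrm{Regret}_W$ bounded through Theorem 7.1 of \cite{lattimore2020bandit}, and Proposition 1 of \cite{veeravalli2024quickest}.

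\textbf{The overshoot step.} The gap is at the step you flag yourself, and your proposed fix cannot close it, because the target you reduce to is false. Take $K=1$, so $\mu_W=WI_{\mathcal A}$. Wald's identity gives $\mathbb{E}[\tau_b]\,WI_{\mathcal A}=b+\mathbb{E}[\bar C_{\tau_bW}-b]$, so your target says the mean overshoot at block ends is at most $c(1+\sqrt b)$. But $Y_i$ concentrates at $WI_{\mathcal A}$ with relative fluctuation $O(W^{-1/2})$. Hence $\tau_b\approx\lceil b/(WI_{\mathcal A})\rceil$, and the overshoot is nearly a full block, $WI_{\mathcal A}$, whenever $b/(WI_{\mathcal A})$ sits slightly above an integer. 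For $W\gg\sqrt b$, which $W=o(b)$ permits, this exceeds $c(1+\sqrt b)$. This is a rounding effect of stopping only at block ends, so per-step sub-Gaussian concentration of the centered LLRs does not touch it. Also, the input to Proposition 1 is the second moment $\mathbb{E}[Y_i^2]\approx W^2I_{\mathcal A}^2$, not the variance.

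What this route does give is a numerator $b+O(W)$. Use Wald plus Lorden's overshoot inequality $\mathbb{E}[\bar C_{\tau_bW}-b]\le\mathbb{E}[(Y_1^+)^2]/\mathbb{E}[Y_1]=O(W)$ for the i.i.d.\ block walk. Alternatively, apply Proposition 1 to $Y_i/W$ with threshold $b/W$, which gives $b+c(W+\sqrt{Wb})$. Since $W=o(b)$, either is still $b(1+o(1))$, which is all Theorem~\ref{thm:UCB_CuSum_opt} needs. The paper's proof has the same defect: it feeds $Wv+W^2I_{\mathcal A}^2$ into Proposition 1 and simply asserts that $c$ is free of $W$.

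\textbf{Constants in Step 1.} These problems should be repaired rather than waved through.

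First, the bonus $\sqrt{4v\log W/N_{a,n}}$ does not absorb $v$. Rescaling Theorem 7.1 to variance proxy $v$ gives $\mathrm{Regret}_W\le\sum_a(3\Delta_a+16v\log W/\Delta_a)$. This UCB pulls a suboptimal arm roughly $4v\log W/\Delta_a^2$ times, so the stated $16\log W/\Delta_a$ can be too small when $v>4$; the statement drops $v$ as well.

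Second, your remark on unaffected arms is correct and exposes a flaw in the statement that the paper's proof ignores. With $\Delta_a=I_{\mathcal A}-D(f_{a,1}\|f_{a,0})$, an unaffected arm with $D(f_{a,1}\|f_{a,0})\ge I_{\mathcal A}$ is left out of the sum. Yet it is pulled at least once per window at cost $I_{\mathcal A}+D(f_{a,0}\|f_{a,1})$. So the stated denominator can overestimate the drift, and the inequality can fail for fixed $W$ and large $b$.

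Rather than saying the statement's $\Delta_a$ ``suffices'', prove the bound with the true gaps $I_{\mathcal A}+D(f_{a,0}\|f_{a,1})$ for $a\notin\mathcal A$. Their LLRs are sub-Gaussian under $f_{a,0}$ by the change of measure $\mathbb{E}_{f_{a,0}}[e^{\theta L}]=\mathbb{E}_{f_{a,1}}[e^{(\theta-1)L}]$. The constant may be larger than $v$, which alters the constants in Theorem 7.1 for those arms. For the asymptotic theorem, only the $O(\log W)=o(W)$ size of the regret matters.
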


\begin{proof}
For any $n \in \mbb{N}$, define the statistic $\Bar{C}_{n}^{A_{\mrm{UCB}}}$ to be the sum of the LLRs up to time-step $n$, i.e.,
\begin{align}\label{eq:Sum_LLR}
    \Bar{C}_{n}^{A_{\mrm{UCB}}}\coloneqq \sum_{i=1}^{n} \mrm{LLR}\lp A_{i}, X_{A_{i}, i}\rp.
\end{align}
Also, we define $\bar{T}_{b}^{A_{\mrm{UCB}}}$ to be the stopping time at which $\Bar{C}_{n}^{A_{\mrm{UCB}}}$ crosses the threshold $b$ at the end of an $W$-length interval $\mcal{I}_{i} = \lbp (i-1)W + 1, \dots, iW \rbp$, i.e.,
\begin{align}\label{eq:Sum_LLR_stop}
    \bar{L}_{b}^{A_{\mrm{UCB}}} \coloneqq \inf \lbp i \in \mbb{N}: \Bar{C}_{iW}^{A_{\mrm{UCB}}} \geq b \rbp \mrm{\; and \;} \Bar{T}_{b}^{A_{\mrm{UCB}}} \coloneqq \bar{L}_{b}^{A_{\mrm{UCB}}}W.
\end{align}
Since $\bar{C}_{n}^{A_{\mrm{UCB}}} \leq C_{n}^{A_{\mrm{UCB}}}$ for all $n \in \mbb{N}$, $\Bar{T}_{b}^{A_{\mrm{UCB}}} \geq T_{b}^{A_{\mrm{UCB}}}$. Now, for any $i \in \mbb{N}$, define the filtration $\mcal{G}_{i} \coloneqq \sigma \lp X_{1}, \dots, X_{iW} \rp$ and the following statistic to be the difference of $\bar{C}_{n}^{A_{\mrm{UCB}}}$ over interval $\mcal{I}_{i} = \lbp \lp i-1\rp W + 1, \dots, iW \rbp$, i.e.,
\begin{equation}
    Y_{i}^{A_{\mrm{UCB}}} \coloneqq \Bar{C}_{iW}^{A_{\mrm{UCB}}} - \Bar{C}_{\lp i - 1 \rp W}^{A_{\mrm{UCB}}} = \sum_{j=(i-1)W + 1}^{iW} \mrm{LLR}\lp A_{j}, X_{A_{j}, j}\rp.
\end{equation}
Recall that $I_{\mcal{A}} \coloneqq \max_{a \in \mcal{A}} D \lp f_{a, 1} || f_{a,0} \rp$. Then,
\begin{equation}
    \E_{1,\mcal{A}}^{A_{\mrm{UCB}}} \lb Y_{i}^{A_{\mrm{UCB}}} \Big| \mcal{G}_{i-1} \rb \overset{(a)}{=} \E_{1,\mcal{A}}^{A_{\mrm{UCB}}} \lb Y_{i}^{A_{\mrm{UCB}}} \rb = \E_{1,\mcal{A}}^{A_{\mrm{UCB}}} \lb \sum_{j=(i-1)W + 1}^{iW} \mrm{LLR}\lp A_{j}, X_{A_{j}, j}\rp \rb \overset{(b)}{\leq} WI_{\mcal{A}},
\end{equation}
and
\begin{equation}
    \E_{1,\mcal{A}}^{A_{\mrm{UCB}}} \lb Y_{i}^{A_{\mrm{UCB}}} | \mcal{G}_{i-1} \rb = \E_{1,\mcal{A}}^{A_{\mrm{UCB}}} \lb \sum_{j=(i-1)W + 1}^{iW} \mrm{LLR}\lp A_{j}, X_{A_{j}, j}\rp \rb \overset{(c)}{\geq} WI_{\mcal{A}} - \sum_{a: \Delta_{a} > 0} \lp 3\Delta_{a} +  \frac{16 \log W}{\Delta_{a}} \rp.
\end{equation}
In step $(a)$, we leverage the fact that $Y_{i}^{A_{\mrm{UCB}}}$ is independent of $\mcal{G}_{i-1}$. In step $(b)$, we apply the fact that $\E_{1,\mcal{A}}^{A_{\mrm{UCB}}} \lb \mrm{LLR}\lp A_{j}, X_{A_{j}, j}\rp \rb \leq I_{\mcal{A}}$. In step $(c)$, we apply the regret upper bound in Theorem 7.1 in \cite{lattimore2020bandit}, as the LLRs are the rewards of the UCB algorithm. Recall that we assume $\int \lp \log \lp f_{a,1} / f_{a,0} \rp - D \lp f_{a,1} || f_{a,0} \rp \rp^{2} f_{a,1}d\lambda = v < \infty$ for any $a \in \lb K \rb$. Then,
\begin{align}
    &\E_{1,\mcal{A}}^{A_{\mrm{UCB}}} \lb \lp Y_{i}^{A_{\mrm{UCB}}} \rp^{2} \bigg| \mcal{G}_{i-1} \rb \nonumber\\
    &\overset{(a)}{=} \E_{1,\mcal{A}}^{A_{\mrm{UCB}}} \lb \lp Y_{i}^{A_{\mrm{UCB}}} \rp^{2} \rb \nonumber\\
    &= \E_{1,\mcal{A}}^{A_{\mrm{UCB}}} \lb \sum_{j=(i-1)W + 1}^{iW} \sum_{k=(i-1)W + 1}^{iW} \mrm{LLR}\lp A_{j}, X_{A_{j}, j}\rp \mrm{LLR}\lp A_{k}, X_{A_{k}, k}\rp \rb \nonumber\\
    &= \sum_{j=(i-1)W + 1}^{iW} \sum_{k=(i-1)W + 1}^{iW} \E_{1,\mcal{A}}^{A_{\mrm{UCB}}} \lb \mrm{LLR}\lp A_{j}, X_{A_{j}, j}\rp \mrm{LLR}\lp A_{k}, X_{A_{k}, k}\rp \rb \nonumber\\
    &= \sum_{j=(i-1)W + 1}^{iW} \E_{1,\mcal{A}}^{A_{\mrm{UCB}}} \lb \lp\mrm{LLR}\lp A_{j}, X_{A_{j}, j}\rp\rp^{2} \rb + 2\sum_{j=(i-1)W + 1}^{iW} \sum_{k=j+1}^{iW} \E_{1,\mcal{A}}^{A_{\mrm{UCB}}} \lb \mrm{LLR}\lp A_{j}, X_{A_{j}, j}\rp \mrm{LLR}\lp A_{k}, X_{A_{k}, k}\rp \rb \nonumber \\
    &\overset{(b)}{\leq} Wv + W^{2}I_{\mcal{A}}^{2}
\end{align}
where step $(a)$ stems from the fact that $Y_{i}^{A_{\mrm{UCB}}}$ is independent of $\mcal{G}_{i-1}$, and step $(b)$ follows from Assumption \ref{assum:KL}. Then, we can apply Proposition 1 in \cite{veeravalli2024quickest} and obtain that 
\begin{equation}\label{eq:L_upp}
    \E_{1,\mcal{A}}^{A_{\mrm{UCB}}} \lb \bar{L}_{b}^{A_{\mrm{UCB}}} \rb \leq \frac{b + c \lp 1 + \sqrt{b} \rp}{WI_{\mcal{A}} -  \sum_{a: \Delta_{a} > 0} \lp 3\Delta_{a} + \frac{16 \log W}{\Delta_{a}} \rp}.
\end{equation}
Thus, 
\begin{equation}\label{eq:final_upp}
    \E_{1,\mcal{A}}^{A_{\mrm{UCB}}} \lb T_{b}^{A_{\mrm{UCB}}} \rb \overset{(a)}{\leq} \E_{1,\mcal{A}}^{A_{\mrm{UCB}}} \lb \Bar{T}_{b}^{A_{\mrm{UCB}}} \rb \overset{(b)}{\leq} \frac{W \lp b + c \lp 1 + \sqrt{b} \rp \rp}{WI_{\mcal{A}} - \sum_{a: \Delta_{a} > 0} \lp 3\Delta_{a} + \frac{16 \log W}{\Delta_{a}} \rp}
\end{equation}
where step $(a)$ follows from the fact that $\bar{T}_{b}^{A_{\mrm{UCB}}} \geq T_{b}^{A_{\mrm{UCB}}}$, and step $(b)$ results from \eqref{eq:L_upp}.
\end{proof}

With Lemma \ref{lem:E1_upp_UCB_CuSum}, we derive
\begin{equation}
    \mcal{J}_{\mcal{A}} \lp A_{\mrm{UCB}}, T_{b}^{A_{\mrm{UCB}}} \rp \leq W + \frac{W \lp b + c \lp 1 + \sqrt{b} \rp \rp}{WI_{\mcal{A}} - \sum_{a: \Delta_{a} > 0} \lp 3\Delta_{a} + \frac{16 \log W}{\Delta_{a}} \rp}.
\end{equation}
Then, as $W, b \to \infty$ with $W = o(b)$,
\begin{equation}
     \mcal{J}_{\mcal{A}} \lp A_{\mrm{UCB}}, T_{b}^{A_{\mrm{UCB}}} \rp \leq \frac{b}{I_{\mcal{A}}} \lp 1 + o \lp 1 \rp \rp.
\end{equation}
This completes the proof of Theorem \ref{thm:UCB_CuSum_opt}.
%%%%%%%%%%%%%%%%%%%%%%%%%%%%%%%%%%%%%%%%%%%%%%%%%%%%%%%%%%%%%%%%%%%%%%%%%%%%%%%
%%%%%%%%%%%%%%%%%%%%%%%%%%%%%%%%%%%%%%%%%%%%%%%%%%%%%%%%%%%%%%%%%%%%%%%%%%%%%%%

\section{Proof of Theorem \ref{thm:PA_UCB_CuSum_opt}}
\label{sec:thm3}

To prove that $\E_{\infty}^{A_{\mrm{UCB}}} \lb \tilde{T}_{\log \gamma}^{A_{\mrm{UCB}}} \rb \geq \gamma$, we first construct the following SR-like statistic \cite{shiryaev1961problem} for each action: for $a \in \lb K \rb$ and $n \in \mbb{N}$,
\begin{equation}\label{eq:SR}
    S_{a,n}^{A_{\mrm{UCB}}} \coloneqq \begin{dcases}
        \lp S_{a,n-1}^{A_{\mrm{UCB}}} + 1 \rp \exp\lp\mrm{LLR} \lp a, X_{a,n} \rp \rp,& A_{n} = a\\
        S_{a,n-1}^{A_{\mrm{UCB}}},& \mrm{otherwise}
    \end{dcases}
\end{equation}
with $S_{a,0} \coloneqq 0$. In addition, we define the following statistic as the sum of the SR-like statistic in \eqref{eq:SR}: for each $n \in \mbb{N}$,
\begin{align} \label{eq:SR_Sum}
    S_{n} \coloneqq \sum_{a \in \lb K \rb} S_{a, n}.
\end{align}
Thereupon, we can see that $\lbp S_{n}: n \in \mbb{N} \rbp$ is adapted to the filtration $\lbp \mcal{F}_{n}: n \in \mbb{N} \rbp$. We show that $\lbp S_{n} - n: n \in \mbb{N} \rbp$ is a martingale with respect to the filtration $\lbp \mcal{F}_{n}: n \in \mbb{N} \rbp$ under $\E_{\infty}^{A_{\mrm{UCB}}}$: for any $n \in \mbb{N}$,
\begin{align}
    \E_{\infty}^{A_{\mrm{UCB}}} \lb S_{n} - n | \mcal{F}_{n-1} \rb &= \E_{\infty}^{A_{\mrm{UCB}}} \lb S_{A_{n},n} + \sum_{a \neq A_{n}} S_{a, n} - n \Bigg| \mcal{F}_{n-1} \rb \nonumber\\
    &= \E_{\infty}^{A_{\mrm{UCB}}} \lb \lp S_{A_{n},n - 1} + 1 \rp \exp \lp \mrm{LLR} \lp A_{n}, X_{A_{n}, M_{A_{n}, n}} \rp \rp + \sum_{a \neq A_{n}} S_{a, n-1} - n\Bigg| \mcal{F}_{n-1} \rb \nonumber\\
    &= \lp S_{A_{n},n - 1} + 1 \rp \E_{\infty}^{A_{\mrm{UCB}}} \lb \exp \lp \mrm{LLR} \lp A_{n}, X_{A_{n}, M_{A_{n}, n}} \rp \rp \big| \mcal{F}_{n-1} \rb + \sum_{a \neq A_{n}} S_{a, n-1} - n \nonumber\\
    &= \lp S_{A_{n},n - 1} + 1 \rp \E_{\infty}^{A_{\mrm{UCB}}} \lb f_{A_{n},1} \lp X_{A_{n}, n} \rp / f_{A_{n},0} \lp X_{A_{n}, n} \rp \big| \mcal{F}_{n-1} \rb + \sum_{a \neq A_{n}} S_{a, n-1} - n \nonumber\\
    &= \sum_{a \in \lb K \rb} S_{a, n-1} - \lp n - 1 \rp \nonumber\\
    &= S_{n-1} - \lp n - 1 \rp.
\end{align}
We can see that $S_{n} \geq S_{a, n} \geq \exp\lp C_{a, n} \rp$ for any $a \in \lb K \rb$ and $n \in \mbb{N}$. Therefore, 
\begin{align}
    \E_{\infty}^{A_{\mrm{UCB}}} \lb \tilde{T}_{\log \gamma}^{A_{\mrm{UCB}}} \rb \overset{(a)}{=} \E_{\infty}^{A_{\mrm{UCB}}} \lb S_{\tilde{T}_{\log \gamma}^{A_{\mrm{UCB}}}} \rb \geq \E_{\infty}^{A_{\mrm{UCB}}} \lb \exp \lp C_{\tilde{T}_{\log \gamma}^{A_{\mrm{UCB}}}} \rp \rb \overset{(b)}{\geq} \gamma.
\end{align}
where step $(a)$ follows from Optional Sampling Theorem, and step $(b)$ stems from the definition of $\tilde{T}_{\log \gamma}^{A_{\mrm{UCB}}}$ in \eqref{eq:CuSum_T_sep}.

To prove \eqref{eq:thm_PA_WADD}, we show that as $W, b \to \infty$ with $W = o(b)$,
\begin{align}\label{eq:UCB_CuSum_WADD_upp}
\mcal{J}_{\mcal{A}} \lp A_{\mrm{UCB}}, \tilde{T}_{b}^{A_{\mrm{UCB}}} \rp \leq \frac{b}{I_{\mcal{A}}}\lp 1 + o \lp 1 \rp \rp.
\end{align}
The proof of this upper bound on $\mcal{J}_{\mcal{A}}$ requires the following lemma similar to Lemma \ref{lem:nu_to_one}.

\begin{lemma}\label{lem:nu_to_one_PA}
For any change-point $\nu \in \mbb{N}$ and $\mcal{A} \subseteq \lb K \rb$,
\begin{equation}\label{eq:lem1}
    \E_{\nu,\mcal{A}}^{A_{\mrm{UCB}}} \lb \lp \tilde{T}_{b}^{A_{\mrm{UCB}}} - \nu + 1\rp^{+} \Big| \mcal{F}_{\nu-1} \rb \leq W + \E_{1,\mcal{A}}^{A_{\mrm{UCB}}} \lb \tilde{T}_{b}^{A_{\mrm{UCB}}} \rb.
\end{equation}
\end{lemma}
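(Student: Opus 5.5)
We follow the proof of Lemma~\ref{lem:nu_to_one} step by step, replacing the pooled statistic with the per-action statistics. Let $\ell$ be the first restart time-step at or after $\nu$, as in that proof. For each $a \in \lb K \rb$ and $n \geq \ell$, define a restarted per-action statistic $C_{a,\ell:n}^{A_{\mrm{UCB}}}$ by the recursion in \eqref{eq:CuSum_stat_sep}, started at $C_{a,\ell:\ell-1}^{A_{\mrm{UCB}}} \coloneqq 0$. Set
\begin{equation*}
\tilde{T}_{b,\ell}^{A_{\mrm{UCB}}} \coloneqq \inf\lbp n \geq \ell : C_{a,\ell:n}^{A_{\mrm{UCB}}} \geq b \mrm{\ for\ some\ } a \in \lb K \rb \rbp .
\end{equation*}

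\textbf{Step 1: domination.} We show by induction on $n \geq \ell$ that $C_{a,\ell:n}^{A_{\mrm{UCB}}} \leq C_{a,n}^{A_{\mrm{UCB}}}$ for every $a$.
\begin{itemize}
\item At $n=\ell-1$ the restarted statistic is $0$, which is at most $(C_{a,\ell-1}^{A_{\mrm{UCB}}})^{+}$.
\item If $A_n = a$, both statistics add the same $\mrm{LLR}(a,X_{a,n})$ to their positive parts, and the map $x \mapsto x^{+}$ is monotone.
\item If $A_n \neq a$, both statistics stay unchanged.
\end{itemize}
It is essential that both statistics are driven by the same action sequence $A_{\mrm{UCB}}$; the statistics do not influence the actions, which depend only on the rewards. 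Domination gives $\tilde{T}_{b,\ell}^{A_{\mrm{UCB}}} \geq \tilde{T}_{b}^{A_{\mrm{UCB}}}$ almost surely, hence
\begin{equation*}
(\tilde{T}_{b}^{A_{\mrm{UCB}}}-\nu+1)^{+} \leq \tilde{T}_{b,\ell}^{A_{\mrm{UCB}}}-\nu+1 .
\end{equation*}

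\textbf{Step 2: reduction to $\nu = 1$.} We repeat steps $(b)$ through $(f)$ of \eqref{eq:CuSum_Lem1_laststep}.
\begin{itemize}
\item Since $\ell - \nu \leq W$, we can split off at most $W$.
\item The UCB policy restarts at $\ell$, so the actions $A_n$ for $n \geq \ell$ depend only on observations from times $\geq \ell$. Therefore $\tilde{T}_{b,\ell}^{A_{\mrm{UCB}}} - \ell + 1$ is independent of $\mcal{F}_{\ell-1} \supseteq \mcal{F}_{\nu-1}$, and the conditioning can be dropped.
\item For $n \geq \ell$ the observation law is the same under $\mbb{P}_{\nu,\mcal{A}}$ and $\mbb{P}_{\ell,\mcal{A}}$. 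The time shift by $\ell-1$ maps the $\ell$-restarted procedure onto the procedure with change at $1$.
\item Finally, $\tilde{T}_{b,1}^{A_{\mrm{UCB}}} = \tilde{T}_{b}^{A_{\mrm{UCB}}}$ by definition.
\end{itemize}

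\textbf{Main obstacle: the independence step.} We must check that the joint process $\big(A_n, (C_{a,\ell:n}^{A_{\mrm{UCB}}})_{a}\big)_{n\geq\ell}$ is a measurable function of $\{X_{a,n} : n \geq \ell\}$ alone. The risk is that the pre-restart per-action values $C_{a,\ell-1}^{A_{\mrm{UCB}}}$ leak into it. They do not: these values enter only $\tilde{T}_{b}^{A_{\mrm{UCB}}}$, not $\tilde{T}_{b,\ell}^{A_{\mrm{UCB}}}$, and the UCB indices reset to $N_a = 0$ and $\mrm{UCB}_a = \infty$ at $\ell$. Once this is established, the rest is a verbatim copy of the argument for Lemma~\ref{lem:nu_to_one}.
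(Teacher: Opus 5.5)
Your proposal is correct and follows the paper's proof essentially step for step: restarted per-action CuSums at the first restart $\ell\ge\nu$, domination, $\ell-\nu\le W$, independence from $\mathcal{F}_{\ell-1}$ via the UCB reset, change of measure from $\nu$ to $\ell$, and the time shift to $\nu=1$. One small point, which the paper also glosses over: the induction should be stated for positive parts, $\bigl(C_{a,\ell:n}^{A_{\mathrm{UCB}}}\bigr)^{+}\le\bigl(C_{a,n}^{A_{\mathrm{UCB}}}\bigr)^{+}$, because an action not yet sampled after $\ell$ has $C_{a,\ell:n}^{A_{\mathrm{UCB}}}=0$ while $C_{a,n}^{A_{\mathrm{UCB}}}=C_{a,\ell-1}^{A_{\mathrm{UCB}}}$ may be negative; this weaker inequality is exactly what your three bullets prove, and it still gives $\tilde{T}_{b,\ell}^{A_{\mathrm{UCB}}}\ge\tilde{T}_{b}^{A_{\mathrm{UCB}}}$ since $b>0$.
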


\begin{proof}
Recall that $\ell \coloneqq \argmin \lbp i \in \mbb{N} : i \geq \nu \mrm{\:and}\enspace i \mod W = 1 \rbp$ is the first time-step at which the UCB algorithm restarts after the change occurs. We define the CuSum-like statistic that accumulates the LLRs starting from time-step $\ell$ using the following recursion: for $n \geq \ell$ and $a \in \lb K \rb$, 
\begin{equation}\label{eq:windowed_CuSum_sep}
    C_{a,\ell:n}^{A_{\mrm{UCB}}} \coloneqq \begin{dcases}
        \max \lbp C_{a,\ell:n - 1}^{A_{\mrm{UCB}}}, 0 \rbp + \mrm{LLR} \lp a, X_{a,n} \rp,& A_{n} = a\\
        C_{a,\ell:n - 1}^{A_{\mrm{UCB}}},& \mrm{otherwise}
    \end{dcases}
\end{equation}
with $C_{a, \ell: \ell - 1}^{A_{\mrm{UCB}}} \coloneqq 0$. We also define a new stopping time that stops when $C_{a,\ell:n}^{A_{\mrm{UCB}}}$ crosses the threshold $b$: 
\begin{equation}\label{eq:windowed_T_sep}
    \tilde{T}_{b, \ell}^{A_{\mrm{UCB}}} \coloneqq \inf \lbp n \geq \ell: C_{a, \ell:n}^{A_{\mrm{UCB}}} \geq b \mrm{\:for\:some\:} a \in \lb K \rb \rbp.
\end{equation}
Then, following similar steps in \eqref{eq:CuSum_Lem1_laststep},
\begin{align}
    \E_{\nu,\mcal{A}}^{A_{\mrm{UCB}}} \lb \lp \tilde{T}_{b}^{A_{\mrm{UCB}}} - \nu + 1\rp^{+} \Big| \mcal{F}_{\nu-1} \rb &\overset{(a)}{\leq} \E_{\nu,\mcal{A}}^{A_{\mrm{UCB}}} \lb \lp \tilde{T}_{b, \ell}^{A_{\mrm{UCB}}} - \nu + 1\rp^{+} \Big| \mcal{F}_{\nu-1} \rb \nonumber\\
    &= \E_{\nu,\mcal{A}}^{A_{\mrm{UCB}}} \lb \tilde{T}_{b, \ell}^{A_{\mrm{UCB}}} - \nu + 1 \Big| \mcal{F}_{\nu-1} \rb \nonumber\\
    &= \ell - \nu + \E_{\nu,\mcal{A}}^{A_{\mrm{UCB}}} \lb \tilde{T}_{b, \ell}^{A_{\mrm{UCB}}} - \ell + 1 \Big| \mcal{F}_{\nu-1} \rb \nonumber\\
    &\overset{(b)}{\leq} W + \E_{\nu,\mcal{A}}^{A_{\mrm{UCB}}} \lb \tilde{T}_{b, \ell}^{A_{\mrm{UCB}}} - \ell + 1 \Big| \mcal{F}_{\nu-1} \rb \nonumber\\
    &\overset{(c)}{=} W + \E_{\nu,\mcal{A}}^{A_{\mrm{UCB}}} \lb \tilde{T}_{b, \ell}^{A_{\mrm{UCB}}} - \ell + 1 \rb \nonumber\\
    &\overset{(d)}{=} W + \E_{\ell,\mcal{A}}^{A_{\mrm{UCB}}} \lb \tilde{T}_{b, \ell}^{A_{\mrm{UCB}}} - \ell + 1 \rb \nonumber\\
    &\overset{(e)}{=} W + \E_{1,\mcal{A}}^{A_{\mrm{UCB}}} \lb \tilde{T}_{b, 1}^{A_{\mrm{UCB}}} \rb \nonumber\\
    &\overset{(f)}{=} W + \E_{1,\mcal{A}}^{A_{\mrm{UCB}}} \lb \tilde{T}_{b}^{A_{\mrm{UCB}}} \rb. \label{eq:CuSum_Lem1_laststep_sep}
\end{align}
In step $(a)$, $\tilde{T}_{b, \ell}^{A_{\mrm{UCB}}} \geq \tilde{T}_{b}^{A_{\mrm{UCB}}}$ almost surely, since we can show that $C_{a,\ell:n}^{A_{\mrm{UCB}}} \leq C_{a,n}^{A_{\mrm{UCB}}}$ for $n \geq \ell$ by induction. In step $(b)$, since $\ell$ is the first restart time-step after the change, $\ell - \nu \leq W$. In step $(c)$, because $C_{a,\ell:n}^{A_{\mrm{UCB}}}$ is independent the observations before time-step $\ell$, we can utilize the fact that $\tilde{T}_{b, \ell}^{A_{\mrm{UCB}}}$ is independent of $\mcal{F}_{\ell - 1}$ and $\mcal{F}_{\nu - 1} \subseteq \mcal{F}_{\ell - 1}$. Step $(d)$ follows from the fact that $X_{a,n}$ follows the post-change density $f_{a,1}$ for $n \geq \ell$ and $a \in \mcal{A}$ when the change occurs either at $\nu$ or at $\ell$. Step $(e)$ results from the fact that $X_{a,n}$ follows the post-change density $f_{a,1}$ for $n \geq \ell$ and $a \in \mcal{A}$ under $\mbb{P}_{\ell, \mcal{A}}^{A_{\mrm{UCB}}}$, and that $X_{a,n}$ follows the post-change density $f_{a,1}$ for $n \geq 1$ and $a \in \mcal{A}$ under $\mbb{P}_{1, \mcal{A}}^{A_{\mrm{UCB}}}$. In step $(f)$, $\tilde{T}_{b}^{A_{\mrm{UCB}}}=\tilde{T}_{b, 1}^{A_{\mrm{UCB}}}$  by the definitions in \eqref{eq:CuSum_stat_sep} and \eqref{eq:windowed_CuSum_sep}.
\end{proof}

With Lemma \ref{lem:nu_to_one_PA}, we can derive that
\begin{align}
    \mcal{J}_{\mcal{A}} \lp A_{\mrm{UCB}}, \tilde{T}_{b}^{A_{\mrm{UCB}}} \rp \leq W + \E_{1,\mcal{A}}^{A_{\mrm{UCB}}} \lb \tilde{T}_{b}^{A_{\mrm{UCB}}} \rb.
\end{align}
To upper bound $\E_{1,\mcal{A}}^{A_{\mrm{UCB}}} \lb \tilde{T}_{b}^{A_{\mrm{UCB}}} \rb$, we need the following lemma.

\begin{lemma}\label{lem:E1_upp_PA_UCB_CuSum}
For some constant $c > 0$ that does not depend on $b, I_{\mcal{A}}$, and $W$,
\begin{equation}
    \E_{1,\mcal{A}}^{A_{\mrm{UCB}}} \lb \tilde{T}_{b}^{A_{\mrm{UCB}}} \rb \leq \frac{W \lp b + c \lp 1 + \sqrt{b} \rp \rp}{WI_{\mcal{A}} - I_{\mcal{A}}\lp \sum_{a: \Delta_{a} > 0} 3 + \frac{16 \log W}{\lp\Delta_{a}\rp^{2}} \rp}
\end{equation}
where $\Delta_{a} = I_{\mcal{A}} - D \lp f_{a,1} || f_{a,0} \rp$.
\end{lemma}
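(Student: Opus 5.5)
The argument parallels the proof of Lemma \ref{lem:E1_upp_UCB_CuSum}. The new difficulty is that $\tilde{T}_{b}^{A_{\mrm{UCB}}}$ stops only when a \emph{single} per-action statistic crosses $b$, so pooling LLRs across actions no longer dominates the stopping time. The idea is to track only the optimal action $a^{\star} \in \argmax_{a \in \mcal{A}} D(f_{a,1}||f_{a,0})$, whose per-action statistic grows at rate $I_{\mcal{A}}$ per pull.

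The first step is a comparison. Define $\bar{C}_{a^{\star},n} \coloneqq \sum_{i=1}^{n} \mrm{LLR}(a^{\star}, X_{a^{\star},i})\mbb{I}\{A_i = a^{\star}\}$. By induction, $C_{a^{\star},n}^{A_{\mrm{UCB}}} \geq \bar{C}_{a^{\star},n}$ for every $n$. Let $\bar{L}_b \coloneqq \inf\{i : \bar{C}_{a^{\star},iW} \geq b\}$. Then $\tilde{T}_{b}^{A_{\mrm{UCB}}} \leq W\bar{L}_b$.

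The second step sets up the window increments. For window $\mcal{I}_i$, let
\[
\tilde{Y}_i \coloneqq \sum_{j \in \mcal{I}_i} \mrm{LLR}(a^{\star}, X_{a^{\star},j})\mbb{I}\{A_j = a^{\star}\}.
\]
Because the UCB policy restarts at the start of each window and $\nu = 1$, the $\tilde{Y}_i$ are i.i.d. and independent of $\mcal{G}_{i-1}$, exactly as in step $(a)$ of Lemma \ref{lem:E1_upp_UCB_CuSum}. Within a window, $A_j$ is $\mcal{F}_{j-1}$-measurable and $X_{a^{\star},j}$ is independent of $\mcal{F}_{j-1}$. A Wald-type argument then gives
\[
\E_{1,\mcal{A}}[\tilde{Y}_i] = I_{\mcal{A}}\,\E_{1,\mcal{A}}[N_{a^{\star}}],
\]
where $N_{a^{\star}}$ is the number of pulls of $a^{\star}$ in the window. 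Since $\sum_a N_a = W$,
\[
\E[N_{a^{\star}}] = W - \sum_{a \neq a^{\star}} \E[N_a].
\]
For the suboptimal pulls I would use the pull-count bound from the proof of Theorem 7.1 in \cite{lattimore2020bandit}: under $v$-sub-Gaussian rewards and the index \eqref{eq:UCB_index}, $\E[N_a] \leq 3 + 16\log W/\Delta_a^{2}$ for each $a$ with $\Delta_a > 0$. This is the count bound from which the regret bound in Lemma \ref{lem:E1_upp_UCB_CuSum} was derived by multiplying by $\Delta_a$. It yields
\[
\E[\tilde{Y}_i \mid \mcal{G}_{i-1}] \geq W I_{\mcal{A}} - I_{\mcal{A}}\sum_{a:\Delta_a>0}\bigl(3 + 16\log W/\Delta_a^{2}\bigr),
\]
which is precisely the denominator in the statement. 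The upper drift bound $\E[\tilde{Y}_i] \leq W I_{\mcal{A}}$ is immediate.

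I expect one point to need care here: actions with $\Delta_a = 0$ but $a \neq a^{\star}$, including actions outside $\mcal{A}$ whose rewards have negative mean. For actions with $\Delta_a = 0$, I would handle this by choosing $a^{\star}$ as a specific maximizer. Alternatively, since ties do not affect the first-order rate, I would assume a unique maximizer and otherwise track the sum of the statistics of all maximizing actions. Note also that actions outside $\mcal{A}$ have post-change mean $-D(f_{a,0}||f_{a,1}) \leq 0 < I_{\mcal{A}}$, so they are legitimate suboptimal arms for the count bound.

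The third step is the second moment. Expanding $\tilde{Y}_i^2$ as in Lemma \ref{lem:E1_upp_UCB_CuSum}, each diagonal term is at most $v + I_{\mcal{A}}^2$ by Assumption \ref{assum:KL}. Bounding the cross terms by conditioning on the earlier index gives
\[
\E[\tilde{Y}_i^2 \mid \mcal{G}_{i-1}] \leq Wv + W^2 I_{\mcal{A}}^2.
\]
The fourth step combines these. Proposition 1 of \cite{veeravalli2024quickest}, applied to the random walk $\sum_i \tilde{Y}_i$, gives
\[
\E[\bar{L}_b] \leq \frac{b + c(1+\sqrt{b})}{W I_{\mcal{A}} - I_{\mcal{A}}\sum_{a:\Delta_a>0}\bigl(3 + 16\log W/\Delta_a^{2}\bigr)}.
\]
Multiplying by $W$ and using $\tilde{T}_{b}^{A_{\mrm{UCB}}} \leq W\bar{L}_b$ finishes the proof.

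The main obstacle is the second step: replacing the regret bound with a per-arm pull-count bound, and justifying that the UCB analysis applies with LLR rewards in the restarted window, including the tie handling noted above.
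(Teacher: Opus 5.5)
Your main line matches the paper's proof almost step for step. You dominate $\tilde T_b^{A_{\mrm{UCB}}}$ by $W\bar L_b$ built from the unreflected per-arm sum of one best arm, and write the window drift as $I_{\mcal A}\,\E[N_{a^\star}]$. You lower-bound $\E[N_{a^\star}]$ with the pull-count bound from the proof of Theorem 7.1 of \cite{lattimore2020bandit}, bound the second moment, and apply Proposition 1 of \cite{veeravalli2024quickest}. Taking $a^\star\in\argmax_{a\in\mcal A}$ is the right choice; the paper maximizes over all of $[K]$, which can pick an unaffected arm with negative post-change drift.

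The gap is exactly where you flagged it. The bound $\E[N_{a^\star}]\ge W-\sum_{a:\Delta_a>0}(3+16\log W/\Delta_a^2)$ needs every arm other than $a^\star$ to lie in $\{a:\Delta_a>0\}$, and your tie remedies do not supply this.
Fixing one maximizer leaves the other maximizers with no count bound, since $16\log W/\Delta_{a'}^2=\infty$. For two identical arms in $\mcal A$, symmetry gives $\E[N_{a^\star}]\approx W/2$. Summing the statistics over the set $M$ of maximizers contradicts your own opening remark. $\tilde T_b^{A_{\mrm{UCB}}}$ needs a \emph{single} $C_{a,n}\ge b$, whereas $\sum_{a\in M}\bar C_{a,n}\ge b$ only gives $\max_{a\in M}C_{a,n}\ge b/|M|$. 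Finally, ``ties do not affect the first-order rate'' holds for the pooled statistic of Lemma~\ref{lem:E1_upp_UCB_CuSum}, where tied arms incur no regret, but not for the per-action detector.

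In fact the inequality itself fails under ties, so no proof device can rescue it. With two identical arms in $\mcal A$, the sum in the denominator is empty and the claim reads $\E_{1,\mcal A}^{A_{\mrm{UCB}}}[\tilde T_b^{A_{\mrm{UCB}}}]\le (b+c(1+\sqrt b))/I_{\mcal A}$. But every arm is pulled at least once per window, because its index is reset to $\infty$. So each per-action statistic receives at most $W-1$ of every $W$ samples, giving $\tilde T_b^{A_{\mrm{UCB}}}\gtrsim \frac{W}{W-1}\cdot\frac{b}{I_{\mcal A}}$, which exceeds the claim for fixed $W$ as $b\to\infty$. Since UCB keeps tied arms roughly balanced, $\tilde T_b^{A_{\mrm{UCB}}}$ is actually close to $2b/I_{\mcal A}$, a first-order loss.

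The same forced-pull argument applies to any $a\notin\mcal A$ with $D(f_{a,1}\|f_{a,0})\ge I_{\mcal A}$. Such an arm is genuinely suboptimal, with post-change gap $I_{\mcal A}+D(f_{a,0}\|f_{a,1})$ as you observed. Yet it is excluded from $\{a:\Delta_a>0\}$, so your ``precisely the denominator in the statement'' fails for it.

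Of your options, only ``assume a unique maximizer'' works, and it must be a hypothesis of the lemma. You also need to read $\Delta_a$ as the true post-change gap $I_{\mcal A}-\E_{1,\mcal A}[\mrm{LLR}(a,X_{a,n})]$. The paper's proof, which sums only over $\{a:\Delta_a>0\}$, makes the same assumptions silently. Under them your argument goes through, except that with the index $\sqrt{4v\log W/N_{a,n}}$ the count bound is $3+16v\log W/\Delta_a^2$.
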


\begin{proof}
First, let $a^{*} \coloneqq \argmax_{a \in \lb K \rb} D\lp f_{a,1} || f_{a,0} \rp$ and define the following statistic to be the summation of the LLRs at time-steps when the control policy chooses action $a$: for any $a \in \lb K \rb$ and $n \in \mbb{N}$,
\begin{align}\label{eq:Sum_LLR_sep}
    \Bar{C}_{a,n}^{A_{\mrm{UCB}}}\coloneqq \sum_{i=1}^{n} \mbb{I} \lbp A_{i} = a \rbp \mrm{LLR}\lp a, X_{a, i} \rp.
\end{align}
Also, define $\Bar{T}_{a,b}^{A_{\mrm{UCB}}}$ to be the stopping time at which $\Bar{C}_{a,n}^{A_{\mrm{UCB}}}$ crosses the threshold $b$ at the end of an $W$-length interval $\mcal{I}_{i} = \lbp (i-1)W + 1, \dots, iW \rbp$, i.e., for any $a \in \lb K \rb$ and $n \in \mbb{N}$,
\begin{align}\label{eq:Sum_LLR_stop_sep}
    \bar{L}_{a,b}^{A_{\mrm{UCB}}} \coloneqq \inf \lbp i \in \mbb{N}: \Bar{C}_{a, iW}^{A_{\mrm{UCB}}} \geq b \rbp \mrm{\; and \;} \Bar{T}_{a,b}^{A_{\mrm{UCB}}} \coloneqq \bar{L}_{a,b}^{A_{\mrm{UCB}}}W.
\end{align}
Since $\bar{C}_{a,n}^{A_{\mrm{UCB}}} \leq C_{a,n}^{A_{\mrm{UCB}}}$ for all $n \in \mbb{N}$ and $a \in \lb K \rb$, $\bar{T}_{a^{*},b} \geq \tilde{T}_{b}^{A_{\mrm{UCB}}}$. Recall that $\mcal{G}_{i} \coloneqq \sigma \lp X_{1}, \dots, X_{iW} \rp$ for any $i \in \mbb{N}$ and define the statistic: for any $i \in \mbb{N}$ and $a \in \lb K \rb$,
\begin{equation}
    Y_{a,i}^{A_{\mrm{UCB}}} \coloneqq \Bar{C}_{a,iW}^{A_{\mrm{UCB}}} - \Bar{C}_{a,\lp i - 1 \rp W}^{A_{\mrm{UCB}}} = \sum_{j=(i-1)W + 1}^{iW} \mbb{I} \lbp A_{i} = a \rbp \mrm{LLR}\lp a, X_{a, j}\rp.
\end{equation}
Then,
\begin{align}
    \E_{1,\mcal{A}}^{A_{\mrm{UCB}}} \lb Y_{a^{*},i}^{A_{\mrm{UCB}}} \Big| \mcal{G}_{i-1} \rb &\overset{(a)}{=} \E_{1,\mcal{A}}^{A_{\mrm{UCB}}} \lb Y_{a^{*},i}^{A_{\mrm{UCB}}} \rb \nonumber\\
    &= \E_{1,\mcal{A}}^{A_{\mrm{UCB}}} \lb \sum_{j=(i-1)W + 1}^{iW} \E_{1,\mcal{A}}^{A_{\mrm{UCB}}} \lb \mbb{I} \lbp A_{i} = a^{*} \rbp \mrm{LLR}\lp a^{*}, X_{a^{*}, j}\rp | \mcal{F}_{j-1} \rb \rb \nonumber\\
    &= \E_{1,\mcal{A}}^{A_{\mrm{UCB}}} \lb \sum_{j=(i-1)W + 1}^{iW} \mbb{I} \lbp A_{i} = a^{*} \rbp \E_{1,\mcal{A}}^{A_{\mrm{UCB}}} \lb \mrm{LLR}\lp a^{*}, X_{a^{*}, j}\rp | \mcal{F}_{j-1} \rb \rb \nonumber\\
    &\overset{(b)}{=} \E_{1,\mcal{A}}^{A_{\mrm{UCB}}} \lb \sum_{j=(i-1)W + 1}^{iW} \mbb{I} \lbp A_{i} = a^{*} \rbp \E_{1,\mcal{A}}^{A_{\mrm{UCB}}} \lb \mrm{LLR}\lp a^{*}, X_{a^{*}, j}\rp \rb \rb \nonumber\\
    &\overset{(c)}{\leq} WI_{\mcal{A}},
\end{align}
and
\begin{align}
    \E_{1,\mcal{A}}^{A_{\mrm{UCB}}} \lb Y_{a^{*},i}^{A_{\mrm{UCB}}} | \mcal{G}_{i-1} \rb &= \E_{1,\mcal{A}}^{A_{\mrm{UCB}}} \lb \sum_{j=(i-1)W + 1}^{iW} \mbb{I} \lbp A_{i} = a^{*} \rbp \E_{1,\mcal{A}}^{A_{\mrm{UCB}}} \lb \mrm{LLR}\lp a^{*}, X_{a^{*}, j}\rp \rb \rb \nonumber\\
    &\overset{(d)}{\geq} WI_{\mcal{A}} - I_{\mcal{A}}\sum_{a: \Delta_{a} > 0} \lp 3 +  \frac{16 \log W}{\lp \Delta_{a} \rp^{2}} \rp.
\end{align}
In step $(a)$, we apply the fact that $Y_{a^{*},i}^{A_{\mrm{UCB}}}$ is independent of $\mcal{G}_{i-1}$. Step $(b)$ follows from the fact that $X_{a^{*},j}$ is independent of $\mcal{F}_{j-1}$. In step $(c)$, we apply the fact that $\E_{1,\mcal{A}}^{A_{\mrm{UCB}}} \lb \mrm{LLR}\lp a^{*}, X_{a^{*}, j}\rp \rb = I_{\mcal{A}}$. In step $(d)$, we apply the regret upper bound in Theorem 7.1 in \cite{lattimore2020bandit}, as the LLRs are the rewards of the UCB algorithm. Recall that we assume $\int \lp \log \lp f_{a,1} / f_{a,0} \rp \rp^{2} f_{a,1}d\lambda < \infty$ for any $a \in \lb K \rb$. Then,
\begin{align}
    &\E_{1,\mcal{A}}^{A_{\mrm{UCB}}} \lb \lp Y_{a^{*},i}^{A_{\mrm{UCB}}} \rp^{2} \bigg| \mcal{G}_{i-1} \rb \nonumber\\
    &\overset{(a)}{=} \E_{1,\mcal{A}}^{A_{\mrm{UCB}}} \lb \lp Y_{a^{*},i}^{A_{\mrm{UCB}}} \rp^{2} \rb \nonumber\\
    &= \E_{1,\mcal{A}}^{A_{\mrm{UCB}}} \lb \sum_{j=(i-1)W + 1}^{iW} \sum_{k=(i-1)W + 1}^{iW} \mbb{I} \lbp A_{j} = a^{*} \rbp \mbb{I} \lbp A_{k} = a^{*} \rbp \mrm{LLR}\lp a^{*}, X_{a^{*}, j}\rp \mrm{LLR}\lp a^{*}, X_{a^{*}, k}\rp \rb \nonumber\\
    &\leq \sum_{j=(i-1)W + 1}^{iW} \sum_{k=(i-1)W + 1}^{iW} \E_{1,\mcal{A}}^{A_{\mrm{UCB}}} \lb \mrm{LLR}\lp a^{*}, X_{a^{*}, j}\rp \mrm{LLR}\lp a^{*}, X_{a^{*}, k}\rp  \rb \nonumber\\
    &= \sum_{j=(i-1)W + 1}^{iW} \E_{1,\mcal{A}}^{A_{\mrm{UCB}}} \lb \lp\mrm{LLR}\lp a^{*}, X_{a^{*}, j}\rp\rp^{2} \rb + 2\sum_{j=(i-1)W + 1}^{iW} \sum_{k=j+1}^{iW} \E_{1,\mcal{A}}^{A_{\mrm{UCB}}} \lb \mrm{LLR}\lp a^{*}, X_{a^{*}, j}\rp \mrm{LLR}\lp a^{*}, X_{a^{*}, k}\rp \rb \nonumber \\
    &\overset{(b)}{\leq} Wv + (W-1)WI_{\mcal{A}}^{2}
\end{align}
where step $(a)$ stems from the fact that $Y_{a^{*},i}^{A_{\mrm{UCB}}}$ is independent of $\mcal{G}_{i-1}$, and step $(b)$ follows from Assumption \ref{assum:KL}. Then, we can apply Proposition 1 in \cite{veeravalli2024quickest} and obtain that 
\begin{equation}\label{eq:L_upp_sep}
    \E_{1,\mcal{A}}^{A_{\mrm{UCB}}} \lb L_{a^{*},b}^{A_{\mrm{UCB}}} \rb \leq \frac{b + c \lp 1 + \sqrt{b} \rp}{WI_{\mcal{A}} -  I_{\mcal{A}}\sum_{a: \Delta_{a} > 0} \lp 3 + \frac{16 \log W}{\lp \Delta_{a} \rp^{2}} \rp}.
\end{equation}
Thus, 
\begin{equation}\label{eq:final_upp_sep}
    \E_{1,\mcal{A}}^{A_{\mrm{UCB}}} \lb \tilde{T}_{b}^{A_{\mrm{UCB}}} \rb \overset{(a)}{\leq} \E_{1,\mcal{A}}^{A_{\mrm{UCB}}} \lb \Bar{T}_{a^{*},b}^{A_{\mrm{UCB}}} \rb \overset{(b)}{\leq} \frac{W \lp b + c \lp 1 + \sqrt{b} \rp \rp}{WI_{\mcal{A}} - I_{\mcal{A}}\sum_{a: \Delta_{a} > 0} \lp 3 + \frac{16 \log W}{\lp \Delta_{a} \rp^{2}} \rp}
\end{equation}
where step $(a)$ follows from the fact that $\bar{T}_{a^{*},b}^{A_{\mrm{UCB}}} \geq \tilde{T}_{b}^{A_{\mrm{UCB}}}$, and step $(b)$ results from \eqref{eq:L_upp_sep}.
\end{proof}

With Lemma \ref{lem:E1_upp_PA_UCB_CuSum}, we derive
\begin{equation}
    \mcal{J}_{\mcal{A}} \lp A_{\mrm{UCB}}, \tilde{T}_{b}^{A_{\mrm{UCB}}} \rp \leq W + \frac{W \lp b + c \lp 1 + \sqrt{b} \rp \rp}{WI_{\mcal{A}} - I_{\mcal{A}}\sum_{a: \Delta_{a} > 0} \lp 3 + \frac{16 \log W}{\lp \Delta_{a} \rp^{2}} \rp}.
\end{equation}
Then, as $W, b \to \infty$ with $W = o(b)$,
\begin{equation}
     \mcal{J}_{\mcal{A}} \lp A_{\mrm{UCB}}, \tilde{T}_{b}^{A_{\mrm{UCB}}} \rp \leq \frac{b}{I_{\mcal{A}}} \lp 1 + o \lp 1 \rp \rp.
\end{equation}
This completes the proof of Theorem \ref{thm:PA_UCB_CuSum_opt}.

\end{document}